\documentclass{article}
\usepackage[letterpaper,top=2cm,bottom=2cm,left=3cm,right=3cm,marginparwidth=1.75cm]{geometry}
\usepackage{authblk}
\usepackage{enumerate}
\usepackage{amsmath}
\usepackage{amssymb,latexsym,mathrsfs}
\usepackage{amsthm}
\usepackage{color}
\usepackage{cancel}
\usepackage{graphicx}
\usepackage{cite}
\usepackage{longtable}
\usepackage{amscd}
\usepackage{lineno}
\usepackage{booktabs} 
\usepackage{makecell}
\usepackage{hyperref}

\newcommand{\numberset}{\mathbb}
\newcommand{\N}{\numberset{N}}
\newcommand{\Z}{\numberset{Z}}

\newcommand{\Oo}{\mathcal{O}}

\newcommand{\K}{\numberset{K}}
\newcommand{\F}{\numberset{F}}

\newcommand{\Xx}{\mathcal{X}}
\newcommand{\Yy}{\mathcal{Y}}

\newcommand{\PP}{\mathbb{P}}
\newcommand{\Aut}{\mathrm{Aut}}
\newcommand{\Div}{\mathrm{Div}}

\newtheorem{thm}{Theorem}[section]

\newtheorem{prop}[thm]{Proposition}
\newtheorem{cor}[thm]{Corollary}

\newtheorem{ex}[thm]{Example}
\newtheorem{rem}[thm]{Remark}

\newtheorem{defn}[thm]{Definition}

\title{On QC and GQC algebraic geometry codes}
\author[1]{Matteo Bonini}
\author[2]{Arianna Dionigi} 
\author[3]{Francesco Ghiandoni}
\affil[1]{Aalborg University, Department of Mathematical Sciences. mabo@math.aau.dk}
\affil[2]{University of Florence, Department of Mathematics and Computer Science. arianna.dionigi@unifi.it}
\affil[3]{University of Primorska, Department of Mathematics. francesco.ghiandoni@famnit.upr.si}
\date{}

\begin{document}
	
	\maketitle
	
	\begin{abstract}
		
		We present new constructions of quasi-cyclic (QC) and generalized quasi-cyclic (GQC) codes from algebraic curves. Unlike previous approaches based on elliptic curves, our method applies to curves that are Kummer extensions of the rational function field, including hyperelliptic, norm–trace, and Hermitian curves. This allows QC codes with flexible co-index. Explicit parameter formulas are derived using known automorphism-group classifications.
	\end{abstract}
	
	\medskip
	
	\noindent \textbf{Key Words : } Quasi-cyclic codes, AG codes, algebraic curves.
	
	\noindent \textbf{MSC Codes : } 14G50 - 11T71 - 94B27 - 08A35
	
	\medskip

	\section*{Introduction}
	Error-correcting codes are fundamental tools in digital communication, designed to protect information transmitted over noisy channels by introducing redundancy that enables error detection and correction. Since the seminal works of Golay, Hamming, and Shannon in the late 1940s, coding theory has evolved rapidly in response to the growing demands of communication systems and advances in mathematical methods. Among the most studied families are cyclic and quasi-cyclic (QC) codes, valued for their strong algebraic structure, compact generator matrices, and efficient encoding and decoding algorithms \cite{ling2001algebraic}.Despite their long history and extensive study, it remains unknown whether cyclic codes are asymptotically good, a problem that has been open for more than fifty years \cite{martinez2006class}. In contrast, quasicyclic codes were shown over four decades ago to achieve asymptotically good parameters \cite{chen1969some}.
	
	In recent years, QC codes have attracted renewed attention due to their relevance in post-quantum cryptography, where structured linear codes play an important role in the design of secure and efficient cryptographic primitives \cite{melchor2018hamming}. This increased interest motivates the search for new constructions that offer both strong parameters and structural flexibility.
	
	QC codes generalize cyclic codes by requiring invariance under a shift of multiple positions, known as the co-index. This symmetry leads to highly structured generator matrices, so that only a small number of rows must be stored, significantly reducing memory requirements and accelerating encoding and decoding procedures. More generally, generalized quasi-cyclic (GQC) codes may be viewed as tuples of generating polynomials extending the cyclic paradigm.
	
	Traditionally, constructions of QC codes from algebraic curves have focused primarily on elliptic curves \cite{DingTong}, exploiting their automorphism groups to define the code structure. However, this approach is inherently limited: automorphisms of elliptic curves have order at most \(6\), restricting both the achievable co-index and the resulting design flexibility. By contrast, curves with larger cyclic automorphism groups naturally support QC codes whose co-index can be chosen as any divisor of the group order, providing a valuable degree of freedom in practical implementations.
	
	In this paper, we present a general construction of algebraic--geometry QC and GQC codes over \(\mathbb{F}_{q^r}\) (with \(r \ge 1\)), based on the action of automorphism groups on the sets of rational points of suitable algebraic curves. We give explicit constructions arising from several classes of curves, with particular emphasis on those obtained from Kummer extensions of rational function fields. Our framework also yields concrete examples illustrating how to construct QC codes with a wide variety of co-indices.
	
	Our approach applies in particular to a broad family of curves that includes hyperelliptic curves, the norm--trace curve, the Hermitian curve, and certain well-known quotients of the latter. Unlike the elliptic case, these curves often possess automorphisms of significantly higher order, frequently forming cyclic groups. Consequently, QC codes exist with co-index equal to any divisor of the group order. Moreover, cyclic codes themselves arise as a special case when a single orbit of the group action is used, highlighting a subtle but meaningful connection between cyclic symmetry and geometric constructions.
	
	Several of the curves considered are maximal over \(\mathbb{F}_{q^2}\), leading to QC codes with asymptotically good parameters in both rate and relative minimum distance. More generally, increasing the genus tends to improve the Singleton defect, further enhancing performance. In the hyperelliptic setting, examples of maximal curves with cyclic automorphism groups likewise provide strong parameters together with flexibility in the choice of genus and co-index.
	
	\section{Preliminaries}

	Unless stated otherwise, we will consider $q = p^{h}$ for a prime $p$ and an integer $h > 0$, and denote by $\F_q$ the finite field with $q$ elements. 
	
	In several situations we will also consider algebraic curves defined over extensions $\F_{q^{r}}$ of the base field, for example when dealing with maximal curves, Deligne–Lusztig curves, or other families naturally defined over a field larger than $\F_{q}$. When this happens, the corresponding extensions of scalars for the curve and its function field are understood implicitly.

	\subsection{Algebraic curves} 
	Let $\K$ be the algebraic closure of $\F_q$. We denote by $\PP^{n}$ (resp. $\mathbb{A}^n$) the $n$-dimensional projective (resp. affine) space over $\K$, and by $\textnormal{PG}(n,q)$ (resp. $\textnormal{AG}(n,q)$ or $\F_q^n$) the set of $\F_q$-rational points of $\PP^{n}$ (resp. $\mathbb{A}^n$).
	
	Let $F(X,Y)\in \F_q[X,Y]$ be a polynomial defining an affine, geometrically irreducible plane algebraic curve $\Xx$ over $\F_q$ via the equation $F(X,Y)=0$.
	The degree of $\mathcal{X}$ is defined as $\deg(\mathcal{X})=\deg(F)$. Its function field over $\K$ is denoted by $\K(\Xx)$, and the subfield of $\F_q$-rational functions is denoted by $\F_q(\Xx)$. We write $\Div_q(\Xx)$ for the group of divisors of $\Xx$ defined over $\F_q$, and $\Aut_\mathbb{K}(\Xx)$ (resp. $\Aut(\Xx)$) for the group of $\mathbb{K}$-automorphisms (resp. $\F_q$-automorphisms) of $\Xx$. 
	Let $P=(u,v)\in \mathbb{A}^2(\K)$ be a point in the affine plane, and consider the expansion
	\[
	F(X+u,Y+v)=F_0(X,Y)+F_1(X,Y)+F_2(X,Y)+\cdots+F_n(X,Y),
	\]
	where each $F_i$ is either zero or a homogeneous polynomial of degree $i$, and $n$ is the degree of $F$. The \emph{multiplicity} of $P$ on $\mathcal{X}$, denoted by $m_P(\mathcal{X})$ or $m_P(F)$, is defined as the smallest integer $m$ such that $F_m\neq 0$. The plane curve given by the equation $F_m=0$ is called the \emph{tangent cone} of $\mathcal{X}$ at $P$, and any linear component of the tangent cone is referred to as a \emph{tangent} to $\mathcal{X}$ at $P$.
	
	The point $P$ lies on the curve $\mathcal{X}$ if and only if $m_P(\mathcal{X})\ge 1$. If this is the case, $P$ is said to be a \emph{simple} (or nonsingular) point of $\mathcal{X}$ when $m_P(\mathcal{X})=1$, whereas it is a \emph{singular} point if $m_P(\mathcal{X})>1$. An analogous definition applies to points at infinity, namely to points of $\mathcal{X}$ lying on the line at infinity in its projective closure. We denote by $\mathrm{Sing}(\mathcal{X})$ the set of all singular points of $\mathcal{X}$, both affine and at infinity.
	
	Let $H \le \Aut_\mathbb{K}(\Xx)$ and consider the fixed field
	\[
	\K(\Xx)^{H} = \{ f \in \K(\Xx) \mid \sigma(f) = f \text{ for all } \sigma \in H \}.
	\]
	Let $\Xx/H$ denote the non singular projective curve with function field $\K(\Xx)^{H}$. The extension $\K(\Xx) : \K(\Xx)^{H}$ is Galois with group $H$, and we write
	\[
	\pi \colon \Xx \longrightarrow \Xx/H
	\]
	for the associated quotient morphism.
	
	For a point $P \in \Xx$, the stabilizer $H_{P}$ is the subgroup of $H$ fixing $P$, and the ramification index at $P$ is $e_{P} = |H_{P}|$. Points with $e_{P} > 1$ are ramification points, and their images under $\pi$ are branch points.
	
	The group $H$ acts on the points of $\Xx$ and partitions them into orbits
	\[
	\Oo(P) = \{ \sigma(P) \mid \sigma \in H \}.
	\]
	By the orbit–stabilizer theorem,
	\[
	|\Oo(P)| \cdot |H_{P}| = |H|.
	\]
	An orbit is long if $|\Oo(P)| = |H|$ and short otherwise. A point $Q \in \Xx/H$ is a branch point if and only if the orbit lying above $Q$ is short. It is known that $H$ has only finitely many short orbits, and that the covering $\pi$ is unramified if and only if every orbit is long.

	If the genus $g(\Xx) \ge 2$, then $\Aut_\mathbb{K}(\Xx)$ is finite (see for example \cite[Chapter 11]{HKT}). The classical Hurwitz bound,
	\[
	|\Aut_{\mathbb{K}}(\Xx)| \le 84 (g(\Xx) - 1),
	\]
	fails in general in positive characteristic when $p$ divides $|\Aut_{\mathbb{K}}(\Xx)|$. Nevertheless, stronger bounds are known under additional hypotheses. In particular, the following result holds for abelian subgroups.
	
	\begin{thm}[{\cite[Theorem 11.79]{HKT}}]
		\label{thm uppbond abel autom group}
		Let $H$ be an abelian subgroup of $\Aut_\mathbb{K}(\Xx)$. Then
		\[
		|H| \le
		\begin{cases}
			4 g(\Xx) + 4, & \text{if } p \ne 2,\\[2mm]
			4 g(\Xx) + 2, & \text{if } p = 2.
		\end{cases}
		\]
	\end{thm}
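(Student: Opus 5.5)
Since $g(\Xx)\ge 2$, the group $\Aut_\K(\Xx)$ is finite, so $H$ is a finite abelian group acting on $\Xx$. We pass to the quotient $\pi\colon \Xx\to\Xx/H$ and compare genera with the Riemann--Hurwitz formula
\[
2g(\Xx)-2 = |H|\,(2\bar g-2) + \sum_{P} d_P ,
\]
where $\bar g=g(\Xx/H)$ and $d_P\ge e_P-1$ is the different exponent. The plan is to show that the quotient data of an abelian group are so restricted that $|H|\le 4g+4$.

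\textbf{Case $\bar g\ge 2$ or $\bar g=1$.} If $\bar g\ge 2$, Riemann--Hurwitz immediately gives $|H|\le g-1$. If $\bar g=1$, then $\pi$ must be ramified, since $g\ge2$, so $\sum d_P>0$. For each short orbit $\Oo(P)$ we get
\[
\sum_{Q\in\Oo(P)} d_Q \ge \frac{|H|}{e_P}(e_P-1)\ge \frac{|H|}{2}.
\]
Hence $2g-2\ge |H|/2$, that is, $|H|\le 4g-4$.

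\textbf{Case $\bar g=0$.} Write $e_1,\dots,e_k$ for the ramification indices of the short orbits. In the tame case this gives
\[
\frac{2g-2}{|H|}=-2+\sum_i\left(1-\frac{1}{e_i}\right).
\]
The key abelian input is that the stabilizers $H_P$ are cyclic in the tame case. Moreover, the product of the generators of the inertia groups equals $1$ and these generators generate $H$ (a consequence of the genus-$0$ quotient). Abelianness then forces each $e_i$ to divide the lcm of the others, and $|H|$ to divide $\prod e_i$ with strong restrictions. In particular $k\ge3$, and if $k=3$ with $(e_1,e_2,e_3)$ of hyperbolic type, the smallest positive value of $\sum(1-1/e_i)-2$ compatible with the abelian relation is attained by types like $(2,2n,2n)$. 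From $(2,2n,2n)$ one gets
\[
\frac{2g-2}{|H|}\ge \frac{1}{2}-\frac{1}{n}\quad\text{with } |H|=2n,
\]
which yields $2g-2\ge n-2$, i.e.\ $|H|=2n\le 4g+4$. The other admissible types, such as $(n,n,n)$ with $|H|=n$ or $k\ge4$ with all $e_i=2$, give better bounds. I would carry out this enumeration by listing the abelian-compatible signatures and minimizing $|H|/(2g-2)$ over each.

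\textbf{Wild case.} When $p$ divides some $e_P$, we use Hilbert's different formula, $d_P\ge e_P-1+(|H_P^{(1)}|-1)$, together with the fact that in an abelian group the ramification filtration has jumps satisfying Hasse--Arf integrality. This makes $d_P$ noticeably larger, roughly $d_P\ge 2(e_P-1)$ for the Sylow part. The same enumeration should then yield $|H|\le 4g+4$ for $p$ odd. For $p=2$, the involution-heavy signature $(2,2n,2n)$ becomes wild at the points of order $2$, and the extra different contribution improves the bound by $2$, giving $4g+2$.

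\textbf{Main obstacle.} I expect the hardest step to be this wild case, particularly for $p=2$. There the inertia groups need not be cyclic, and controlling the different requires Hasse--Arf, so the bound depends on a careful local analysis. I would first try to reduce to the case where $H$ is cyclic or of the form $p$-group $\times$ cyclic. I would then treat the $p$-part by bounding via the $p$-rank, using Deuring--Shafarevich, before combining it with the tame part.
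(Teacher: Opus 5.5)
Your treatment of $\bar g\ge 1$ is correct. The case $\bar g=0$, however, carries the whole content of the theorem, and there your outline has two genuine gaps. (The paper gives no proof; it only quotes Hirschfeld--Korchm\'aros--Torres, Theorem 11.79, i.e.\ Nakajima's bound.)

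\textbf{Tame case.} You treat $|H|$ as determined by the signature, and you get it wrong exactly where the bound is sharp. For $(e_1,e_2,e_3)=(2,2n,2n)$, the abelian product-one relation only forces $|H|$ to divide $\gcd(e_1e_2,e_1e_3,e_2e_3)=4n$, and $4n$ does occur. Take $H\cong\Z_2\times\Z_{2n}$ acting by $(x,y)\mapsto(\zeta x,\pm y)$, $\zeta^{2n}=1$, on $y^2=x^{2n}-1$. This curve has genus $n-1$, so $|H|=4g+4$. With your value $|H|=2n$, the argument would ``prove'' $|H|\le 4g$ for this signature, which is false. Likewise $(n,n,n)$ allows $|H|=n^2$ (Fermat curves, and $n=4$ again gives $|H|=16=4g+4$), not $|H|=n$. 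So ``listing abelian-compatible signatures and minimizing'' is not a formality. You need the correct constraint on $|H|$: it is a quotient of $\Z_{e_1}\times\cdots\times\Z_{e_{k-1}}$ subject to the relation. Then you must carry out a genuine optimization. You should also justify the presentation in characteristic $p$. Since $\PP^1$ has no nontrivial unramified covers, the abelian group $H$ is generated by its inertia groups. Hence in the tame case $p\nmid|H|$, and Kummer theory gives inertia generators with $g_1\cdots g_k=1$.

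\textbf{Wild case.} This case is not argued, and the route you sketch would not work. For wild covers the signature machinery fails outright. Inertia groups need not be cyclic, there is no product-one relation, and an abelian cover of $\PP^1$ can be branched over a single point (e.g.\ $y^p-y=x^m$). So ``the same enumeration'' has nothing to run on. Your $p=2$ heuristic also misidentifies the extremal situation. If all three points of a $(2,2n,2n)$ configuration are wild, the different bounds below already give $2g-2\ge|H|$. By contrast, $4g+2$ is attained by $y^2+y=x^{2g+1}$ with $H=\langle(x,y)\mapsto(\zeta x,y+1)\rangle\cong\Z_{4g+2}$, where $\zeta$ is a primitive $(2g+1)$-th root of unity. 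This cover is branched over only two points, one of them totally and wildly ramified. A two-point configuration forces genus $0$ when tame.

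The missing idea is the local input that actually uses commutativity. Write $H_P=S_P\times C_P$. The relation $\theta_i(sts^{-1})=\theta_0(s)^i\theta_i(t)$ between the characters of the ramification filtration forces every lower ramification jump to be divisible by $|C_P|$. Hence $d_P\ge 2e_P-|C_P|-1$ whenever $S_P\neq 1$. Hasse--Arf then supplies the extra gain needed when this is not enough, for instance a single branch point with $H$ a $p$-group. Combined with generation by inertia groups, a case analysis on the number and type of branch points in Riemann--Hurwitz yields $4g+4$ and $4g+2$.

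The $p$-rank idea cannot replace this, because Deuring--Shafarevich does not see commutativity. The Sylow $p$-subgroup of the stabilizer of $P_\infty$ on the Hermitian curve has order $q^3$, on a curve of $p$-rank $0$ and genus $q(q-1)/2$. So no bound linear in $g$ for $p$-groups can come from $p$-rank considerations alone.
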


	\subsection{Linear Codes}
	
	\label{sottosez AG codes}   
	
	We recall here some basic definitions in coding theory (for a complete exposition of these concepts the reader is referred to \cite{HP}).
	
	Let $k,n$ be two positive integers such that $k \leq n$. A $q$-ary linear code of dimension $k$ and length $n$ (or an $[n,k]$ code over $\F_q$) is a $k$-dimensional vector subspace $\mathcal{C}\subseteq \F_q^n$.  
	The elements of $\mathcal{C}$ are called codewords. A generator matrix of $\mathcal{C}$ is any $k \times n$ matrix $G$ whose rows form a basis of $\mathcal{C}$ over $\F_q$.
	
	Classically, a linear code is endowed with the Hamming metric.
	
	For $x\in\F_q^n$, the Hamming weight of $x$ is the number of nonzero coordinates of $x$, i.e.,
	\[
	\mathrm{wt}(x) = |\mathrm{supp}(x)|,
	\]
	where $\mathrm{supp}(x)=\{i : x_i\neq 0\}$.  
	The \emph{Hamming distance} between $v,w\in\F_q^n$ is defined as
	\[
	d(v,w)=\mathrm{wt}(v-w).
	\]
	
	The minimum (Hamming) distance of a linear code $\mathcal{C}$ is defined as
	\[
	d(\mathcal{C}) = \min\{\mathrm{wt}(c) \mid c\in \mathcal{C},\, c\neq 0\}.
	\]
	
	One of the most fundamental bounds in coding theory is the Singleton bound.
	
	\begin{prop}[Singleton bound]
		Let $\mathcal{C}$ be an $[n,k]$ linear code with minimum distance $d$. Then
		\[
		d \leq n - k + 1.
		\]
	\end{prop}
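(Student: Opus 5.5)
The plan is a dimension-counting argument based on projecting $\mathcal{C}$ onto a set of coordinates. Let $\mathcal{C}\subseteq\F_q^n$ be an $[n,k]$ linear code with minimum distance $d$. Consider the linear map $\varphi\colon \mathcal{C}\to \F_q^{\,n-d+1}$ that sends a codeword $c=(c_1,\dots,c_n)$ to its first $n-d+1$ coordinates $(c_1,\dots,c_{n-d+1})$. Since $d\le n$ for a nonzero code (a nonzero codeword has at most $n$ nonzero entries), we have $n-d+1\ge 1$, so the target space makes sense.

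The key step is to show that $\varphi$ is injective. Suppose $\varphi(c)=0$ for some $c\in\mathcal{C}$. Then $c$ can be nonzero only in the last $d-1$ coordinates, so $\mathrm{wt}(c)\le d-1<d$. By definition of $d(\mathcal{C})$ as the minimum weight of a nonzero codeword, this forces $c=0$. Hence $\ker\varphi=0$.

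Injectivity gives $k=\dim\mathcal{C}\le \dim \F_q^{\,n-d+1}=n-d+1$, which rearranges to $d\le n-k+1$. If one prefers a counting formulation, the same conclusion follows by comparing $|\mathcal{C}|=q^k$ with $q^{\,n-d+1}$. The only point needing care is the degenerate case $k=0$, where $d$ is not defined or is conventionally set so that the bound holds trivially; I would remark that we assume $k\ge 1$. No step here is a genuine obstacle: the whole argument rests on the observation that two distinct codewords cannot agree on $n-d+1$ coordinates.
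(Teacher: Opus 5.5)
Your proof is correct. Projecting onto any $n-d+1$ coordinates is injective on $\mathcal{C}$, because a codeword vanishing there has weight at most $d-1$, so $k\le n-d+1$; your check that $d\le n$ when $k\ge 1$ and your remark on $k=0$ are also fine. The paper gives no proof and quotes the Singleton bound as a standard fact via \cite{HP}, and your puncturing/projection argument is the standard proof of it.
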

	
	The Singleton defect of an $[n,k,d]$ code $\mathcal{C}$ is defined as
	\[
	s(C) = n - k + 1 - d.
	\]
	A code for which $s(\mathcal{C}) = 0$ is called a Maximum Distance Separable (MDS) code; equivalently, $\mathcal{C}$ meets the Singleton bound with equality.

	MDS codes play a central role in both classical and modern coding theory, appearing for example in Reed--Solomon codes and in several constructions for distributed storage.
	A code $\mathcal{C}$ with Singleton defect $s(\mathcal{C})=1$ is called a almost-MDS (AMDS) code. A code $\mathcal{C}$ is said to be near-MDS if both $\mathcal{C}$ and $\mathcal{C}^\perp$ are AMDS.
	
	Both MDS and AMDS codes satisfy very stringent combinatorial constraints: for example, their parameters impose tight relations between the minimum distances of $\mathcal{C}$ and $\mathcal{C}^\perp$, and only very restricted families of lengths $n$ and alphabet sizes $q$ can support such codes. 
	Related families of interest arise from algebraic-geometry (AG) codes constructed on algebraic curves, where the Singleton defect is upper bounded by the genus of the curve; this provides large classes of codes whose parameters are similarly governed by geometric or combinatorial restrictions, and which interpolate between MDS/AMDS behaviour and more general regimes.

	\subsection{Algebraic Geometry Codes}
	
	Let $\mathcal{X}$ be a projective, smooth, geometrically irreducible algebraic curve defined over the finite field $\F_q$, and let $\F_q(\mathcal{X})$ denote its function field. For a divisor 
	\[
	G = \sum_{P \in \mathcal{X}} n_P P \in \mathrm{Div}_q(\mathcal{X}),
	\]
	the associated \emph{Riemann–Roch space} is defined as
	\[
	\mathscr{L}(G)
	= \{ f \in \F_q(\mathcal{X})^\times \mid (f) + G \ge 0 \} \cup \{0\},
	\]
	where $(f)$ is the principal divisor of $f$. This is a finite-dimensional $\F_q$-vector space, and we denote its dimension by $\ell(G)$.
	
	Let $P_1, \dots, P_n$ be $n$ distinct $\F_q$-rational points of $\mathcal{X}$, and let $G \in \mathrm{Div}_q(\mathcal{X})$ be a divisor such that $v_{P_i}(G) = 0$ for every $i=1,\dots,n$, so that $\mathrm{Supp}(G)$ is disjoint from $\{P_1,\dots,P_n\}$. Consider the evaluation map
	\[
	\begin{split}
		\mathrm{ev} = \mathrm{ev}_{P_1,\ldots,P_n} :
		\mathscr{L}(G) &\longrightarrow \F_q^n,\\
		f &\longmapsto \left( f(P_1), \ldots, f(P_n) \right).
	\end{split}
	\]
	If we define
	\[
	D := P_1 + \cdots + P_n,
	\]
	the \emph{algebraic geometry (AG) code associated with $D$ and $G$} is
	\[
	\mathcal{C}(D,G) := \mathrm{ev}(\mathscr{L}(G)) \subseteq \F_q^n .
	\]
	
	\medskip
	
	Standard properties of AG codes (see \cite[Section II.2]{st}) give the following parameters:
	\begin{equation*}
		n = \deg(D), \qquad 
		k = \ell(G) - \ell(G - D), \qquad
		d \ge n - \deg(G).
	\end{equation*}
	Moreover, by the Riemann–Roch theorem, if
	\[
	n > \deg(G) > 2g - 2,
	\]
	then $G$ is non-special and hence
	\[
	k = \deg(G) - g + 1.
	\]

	\section{Construction of (generalized) quasi-cyclic AG codes}

	Quasi-cyclic codes arise as linear codes that are invariant under a structured blockwise cyclic shift. We formalize this below.

	\begin{defn}
		Let $q$ be a prime power and let $\ell,m \in \mathbb{Z}_{>0}$.  
		A linear code $\mathcal{C} \subseteq \F_q^{\,\ell m}$ is said to be
		\emph{quasi-cyclic of index $\ell$ (and co-index $m$)} if it is invariant
		under a blockwise cyclic shift of length $m$.
		
		More precisely, write each codeword $c \in \F_q^{\,\ell m}$ as
		\[
		c = (c_0,c_1,\dots,c_{\ell-1}),
		\qquad
		c_j = (c_{0,j},c_{1,j},\dots,c_{m-1,j}) \in \F_q^{\,m},
		\]
		so that $\F_q^{\,\ell m}$ is viewed as the direct product of $\ell$ blocks,
		each of length $m$.
		
		Let $\tau : \F_q^{\,m} \to \F_q^{\,m}$ denote the cyclic shift
		\[
		\tau(c_{0,j},c_{1,j},\dots,c_{m-1,j})
		=
		(c_{m-1,j},c_{0,j},\dots,c_{m-2,j}),
		\]
		and define the operator $T : \F_q^{\,\ell m} \to \F_q^{\,\ell m}$ by
		\[
		T(c_0,c_1,\dots,c_{\ell-1})
		=
		(\tau(c_0),\tau(c_1),\dots,\tau(c_{\ell-1})).
		\]
		
		Then $\mathcal{C}$ is a QC code of index $\ell$ if
		\[
		T(\mathcal{C}) = \mathcal{C}
		\quad
		\text{(equivalently, $T^k(\mathcal{C})=\mathcal{C}$ for all $k\in\mathbb{Z}$).}
		\]
	\end{defn}
	
	\noindent The following definition generalizes this notion by allowing the block lengths to vary.
	
	\begin{defn}
		Let $q$ be a prime power and let $m_0,m_1,\dots,m_{\ell-1} \in \mathbb{Z}_{>0}$.
		A linear code
		\[
		\mathcal{C} \subseteq \F_q^{\,m_0} \times \F_q^{\,m_1} \times \cdots \times \F_q^{\,m_{\ell-1}}
		\]
		is called a \emph{generalized quasi-cyclic (GQC) code with block lengths
			$(m_0,m_1,\dots,m_{\ell-1})$} if it is invariant under a blockwise cyclic shift.
		
		Explicitly, write each codeword as
		\[
		c = (c_0,c_1,\dots,c_{\ell-1}),
		\qquad
		c_j = (c_{0,j},c_{1,j},\dots,c_{m_j-1,j}) \in \F_q^{\,m_j}.
		\]
		For each $j$, let $\tau_j : \F_q^{\,m_j} \to \F_q^{\,m_j}$ denote the cyclic shift
		\[
		\tau_j(c_{0,j},c_{1,j},\dots,c_{m_j-1,j})
		=
		(c_{m_j-1,j},c_{0,j},\dots,c_{m_j-2,j}),
		\]
		and define
		\[
		T(c_0,c_1,\dots,c_{\ell-1})
		=
		(\tau_0(c_0),\tau_1(c_1),\dots,\tau_{\ell-1}(c_{\ell-1})).
		\]
		
		Then $\mathcal{C}$ is a GQC code if
		\[
		T(\mathcal{C}) = \mathcal{C}
		\quad
		\text{(equivalently, $T^k(\mathcal{C})=\mathcal{C}$ for all $k\in\mathbb{Z}$).}
		\]
	\end{defn}

	\begin{rem}
		Note that if $\mathcal{C}$ is a GQC code of length $(m_1,\dots,m_s)$ with $m_1=\cdots=m_s$ then $\mathcal{C}$ is a quasi cyclic code with lenght $s\cdot m_1$.
	\end{rem}
	
	From now on, we will focus on algebraic curves without affine singular points and having a unique point at infinity, denoted as $P_\infty$. 
	
	\begin{thm}\label{costruzione} Let $\mathcal{X}$ be a curve of genus $g$ over a finite field $\mathbb{F}_q$. Let $\sigma \in \textnormal{Aut}(\Xx)$ be an automorphism of $\mathcal{X}$ fixing  $P_\infty$. Let $\mathcal{O}_1,\dots,\mathcal{O}_m$ be $m$ distinct long orbits of $\mathcal{X}(\F_q)$ under the action of $\sigma$, and let the divisors $D_1$ and $D_2$ be defined as follows
		\[D_1=\sum_{i=1}^{m}\sum_{P_i\in \mathcal{O}_i}P_i,
		\]
		\[
		D_2=t P_\infty\]
		where $2g-2<t<m \cdot ord(\sigma)$. The algebraic geometry code
		\begin{eqnarray*}
			\mathcal{C}(D_1,D_2)&=&(f(P_1),f(\sigma(P_1)),\dots,f(\sigma^{ord(\sigma)-1}(P_1)), \dots,f(P_m),f(\sigma(P_m)),\dots,f(\sigma^{ord(\sigma)-1}(P_m)))
		\end{eqnarray*}
		where $f\in\mathscr{L}(D_2)$ is an $\left[m \cdot ord(\sigma),t+1-g,d\geq n-t\right]_q$ QC code with co-index $ord(\sigma)$.
	\end{thm}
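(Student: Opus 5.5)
The argument splits into two independent parts: the parameters of the code come from the standard theory of AG codes recalled in the previous subsection, and the quasi-cyclic structure comes from the fact that $\sigma$ fixes $P_\infty$.

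\textbf{Parameters.} Set $N:=ord(\sigma)$. Each $\mathcal{O}_i$ is a long orbit of $\langle\sigma\rangle$, so $|\mathcal{O}_i|=N$ and $\mathcal{O}_i=\{P_i,\sigma(P_i),\dots,\sigma^{N-1}(P_i)\}$ with these points pairwise distinct. The orbits are disjoint, hence $n=\deg D_1=mN$.

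We need $P_\infty\notin\mathrm{Supp}(D_1)$. Since $\sigma(P_\infty)=P_\infty$, the orbit of $P_\infty$ has length $1$. If $N>1$ this is a short orbit, so it cannot be any of the long orbits $\mathcal{O}_i$. In the degenerate case $N=1$ we simply require the $\mathcal{O}_i$ to consist of affine points, which is implicit in the construction.

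The hypothesis $2g-2<t<n$ then lets us apply the Riemann--Roch consequences recalled after the definition of $\mathcal{C}(D,G)$:
\begin{itemize}
\item $\ell(D_2-D_1)=0$, since $\deg(D_2-D_1)<0$;
\item $\ell(D_2)=t+1-g$, since $D_2$ is non-special.
\end{itemize}
Therefore $k=t+1-g$ and $d\ge n-t$.

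\textbf{Invariance of the Riemann--Roch space.} The key step is to show that $\mathscr{L}(tP_\infty)$ is stable under $f\mapsto f\circ\sigma$. We have
\[
(f\circ\sigma)=\sigma^{*}(f)=\sigma^{-1}\bigl((f)\bigr),
\]
and $\sigma^{-1}(tP_\infty)=tP_\infty$ because $\sigma$ fixes $P_\infty$. Hence $(f\circ\sigma)+tP_\infty=\sigma^{-1}\bigl((f)+tP_\infty\bigr)\ge 0$. Also $\sigma$ is an $\F_q$-automorphism, so $f\circ\sigma$ is again $\F_q$-rational. Thus $f\circ\sigma\in\mathscr{L}(D_2)$, and the same argument applied to $\sigma^{-1}$ shows this map is a bijection of $\mathscr{L}(D_2)$.

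\textbf{Quasi-cyclic shift.} Order the coordinates as in the statement, block $i$ being $(f(P_i),f(\sigma P_i),\dots,f(\sigma^{N-1}P_i))$. Take $g:=f\circ\sigma^{-1}=f\circ\sigma^{N-1}\in\mathscr{L}(D_2)$. Then
\[
g(\sigma^jP_i)=f(\sigma^{j-1}P_i),
\]
with indices mod $N$, so block $i$ of $\mathrm{ev}(g)$ is $\tau$ applied to block $i$ of $\mathrm{ev}(f)$, where $\tau$ is the cyclic shift of the definition. Hence $T(\mathrm{ev}(f))=\mathrm{ev}(g)\in\mathcal{C}$, so $\mathcal{C}$ is invariant under $T$, i.e. it is QC of index $m$ and co-index $N$.

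The only real obstacle is the pullback/divisor bookkeeping: we must use $\sigma^{-1}$ rather than $\sigma$ so that the evaluation shift matches the direction of $\tau$. Because $\mathscr{L}(D_2)$ is stable under both $\sigma$ and $\sigma^{-1}$, either choice works.
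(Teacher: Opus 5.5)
Your proposal is correct and follows essentially the same route as the paper: the parameters come from Riemann--Roch, $\mathscr{L}(tP_\infty)$ is stable under $\sigma$ because $\sigma$ fixes $P_\infty$, and the cyclic shift is realized by evaluating $f\circ\sigma^{-1}$, which is exactly the paper's $\sigma(f)$, since $\sigma(f)(Q)=f(\sigma^{-1}(Q))$. Your divisor-level bookkeeping, $(f\circ\sigma)+tP_\infty=\sigma^{-1}\bigl((f)+tP_\infty\bigr)\ge 0$, and your explicit check that $P_\infty\notin\mathrm{Supp}(D_1)$ are slightly cleaner than the paper's valuation inequality; the paper writes $\ge t$ where $\ge -t$ is meant.
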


	\begin{proof}
		From \cite[Corollary 2.2.3]{st} it follows that if  
		\[
		2g - 2 < t < m \cdot \operatorname{ord}(\sigma),
		\]
		then \(k = t + 1 - g\) and \(d \ge n - t\). Since \(\sigma\) fixes \(P_{\infty}\), we have \(\sigma^{-1}(P_{\infty}) = P_{\infty}\). For every \(f \in \mathscr{L}(D_2)\),
		\[
		v_{P_{\infty}}(\sigma(f)) 
		= v_{\sigma^{-1}(P_{\infty})}(f) 
		\ge t,
		\]
		and for every \(P \in \mathcal{X} \setminus \{P_{\infty}\}\),
		\[
		v_P(\sigma(f)) = v_{\sigma^{-1}(P)}(f) \ge 0.
		\]
		Thus \(\sigma(f) \in \mathscr{L}(D_2)\).
		
		Let \(\operatorname{ord}(\sigma)=r\), and let \(\{f_1, \dots, f_k\}\) be a basis of \(\mathscr{L}(D_2)\). The generator matrix \(G\) of the code \(\mathcal{C}(D_1, D_2)\) is
		\[
		G=
		\begin{pmatrix}
			f_1(P_1) & \cdots & f_1(\sigma^{r-1}(P_1)) & f_1(P_2) & \cdots & f_1(\sigma^{r-1}(P_2)) & \cdots & f_1(P_m) & \cdots & f_1(\sigma^{r-1}(P_m)) \\
			f_2(P_1) & \cdots & f_2(\sigma^{r-1}(P_1)) & f_2(P_2) & \cdots & f_2(\sigma^{r-1}(P_2)) & \cdots & f_2(P_m) & \cdots & f_2(\sigma^{r-1}(P_m)) \\
			\vdots & \cdots & \vdots & \vdots & \cdots &\vdots & \cdots & \vdots & \cdots & \vdots \\
			f_k(P_1) & \cdots & f_k(\sigma^{r-1}(P_1)) & f_k(P_2) & \cdots & f_k(\sigma^{r-1}(P_2)) & \cdots & f_k(P_m) & \cdots & f_k(\sigma^{r-1}(P_m))
		\end{pmatrix}
		\]
		\[
		= (ev(f_1), \dots, ev(f_k))^{T}.
		\]
		To prove that \(\mathcal{C}(D_1,D_2)\) is quasi-cyclic, we must show that if  
		\[
		\gamma = (\gamma_1, \dots, \gamma_n) = \sum_{i=1}^k c_i \, ev(f_i)
		\]
		belongs to \(\mathcal{C}(D_1, D_2)\), then the vector
		\[
		\gamma' = (\gamma_r, \gamma_1, \dots, \gamma_{r-1},
		\gamma_{2r}, \gamma_{r+1}, \dots, \gamma_{2r-1},
		\dots, \gamma_n, \gamma_{n-r+1}, \dots, \gamma_{n-1})
		\]
		also belongs to \(\mathcal{C}(D_1, D_2)\).
		
		We know that \(\gamma' \in \mathcal{C}(D_1,D_2)\) if and only if  
		\[
		\gamma' = (g(P_1), g(\sigma(P_1)), \dots, g(\sigma^{r-1}(P_1)),
		g(P_2), \dots, g(\sigma^{r-1}(P_2)), \dots,
		g(P_m), \dots, g(\sigma^{r-1}(P_m)))
		\]
		for some \(g \in \mathscr{L}(D_2)\). Equivalently, for each \(i \in \{1,\dots,r\}\),
		\[
		\gamma_i = g(\sigma^{i}(P_1)),
		\]
		and similarly for each block of length \(r\).
		
		In fact, for each \(i \in \{1,\dots,r\}\) and every \(\ell \in \{0,\dots,m-1\}\),
		\[
		\gamma_{i + \ell r}
		= \sum_{j=1}^k c_j f_j(\sigma^{\,i-1}(P_{\ell+1}))
		= \sum_{j=1}^k c_j f_j(\sigma^{-1}(\sigma^{i}(P_{\ell+1})))
		= \sum_{j=1}^k c_j \sigma(f_j)(\sigma^{i}(P_{\ell+1})).
		\]
		
		The claim follows by choosing  
		\[
		g = \sum_{j=1}^k c_j \sigma(f_j),
		\]
		which lies in \(\mathscr{L}(D_2)\) by the first part of the proof.
	\end{proof}

	\begin{thm}\label{costruzionegen} Let $\mathcal{X}$ be a curve of genus $g$ defined over a finite field $\mathbb{F}_q$. Let $\sigma$ be an automorphism of $\mathcal{X}$ that fixes  $P_\infty$. Let $\mathcal{O}_1,\dots,\mathcal{O}_s$ be $s$ distinct non-trivial orbits of $\mathcal{X}$ under the action of $\sigma$, each of length $m_1,\dots,m_s,$ $m_i \mid \textnormal{ord}(\sigma).$ Consider the divisors $D_1$ and $D_2$ defined by
		$$D_1=\sum_{i=1}^{s}\sum_{P_i\in \mathcal{O}_i}P_i,\,\,D_2=t\overline{P}$$
		with $2g-2<t<%m \cdot ord(\sigma)
		m_1+\dots + m_s$. Then the algebraic geometry code
		\[
		\mathcal{C}(D_1,D_2)=(f(P_1),f(\sigma(P_1)),\dots,f(\sigma^{ord(\sigma)-1}(P_1)),\dots, f(P_m),f(\sigma(P_m)),\dots,f(\sigma^{ord(\sigma)-1}(P_m))),
		\]
		where $ f\in\mathscr{L}(D_2)$, is an $\left[n,k,d\right]_q$ GQC code with co-index $(m_1,\dots,m_s)$, $n=m_1+ \dots + m_s $, $k=t+1-g$ and $d\geq n-t$.
	\end{thm}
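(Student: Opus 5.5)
The plan is to follow the proof of Theorem~\ref{costruzione} almost verbatim, changing only the bookkeeping of the blocks. I read $\overline{P}$ in the statement as the fixed point $P_\infty$ of $\sigma$; any $\F_q$-rational point fixed by $\sigma$ outside $\mathrm{Supp}(D_1)$ would serve equally well. For the codeword I take the $i$-th block to be $(f(P_i),f(\sigma(P_i)),\dots,f(\sigma^{m_i-1}(P_i)))$, listing each orbit only up to its length $m_i$.

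\textbf{Step 1: parameters.} The support of $D_2$ is disjoint from that of $D_1$, because $P_\infty$ is fixed by $\sigma$ and so forms a trivial orbit, whereas the $\mathcal{O}_i$ are non-trivial. We have $n=\deg D_1=m_1+\dots+m_s$ and $2g-2<t<n$. Then \cite[Corollary 2.2.3]{st}, as already used in the proof of Theorem~\ref{costruzione}, gives $k=t+1-g$ and $d\ge n-t$.

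\textbf{Step 2: $\sigma$-invariance of $\mathscr{L}(D_2)$.} This repeats the computation in the proof of Theorem~\ref{costruzione}. For $f\in\mathscr{L}(tP_\infty)$ we have $v_{P_\infty}(\sigma(f))=v_{\sigma^{-1}(P_\infty)}(f)=v_{P_\infty}(f)\ge -t$, and $v_P(\sigma(f))\ge 0$ for every other $P$. Hence $\sigma(f)\in\mathscr{L}(D_2)$.

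\textbf{Step 3: blockwise shift.} Let $\gamma=\mathrm{ev}(f)$. Apply $\tau_i$ to the $i$-th block to obtain
\[
(f(\sigma^{m_i-1}(P_i)),f(P_i),\dots,f(\sigma^{m_i-2}(P_i))).
\]
I claim this equals the $i$-th block of $\mathrm{ev}(g)$ for $g=f\circ\sigma^{-1}$ (that is, $\sigma(f)$ in the convention of the previous proof). Indeed, the $j$-th entry of the shifted block is $f(\sigma^{j-1}(P_i))=g(\sigma^{j}(P_i))$. The only delicate entry is the wrap-around $j=0$. There we need $f(\sigma^{m_i-1}(P_i))=f(\sigma^{-1}(P_i))$, which holds because $|\mathcal{O}_i|=m_i$ forces $\sigma^{m_i}(P_i)=P_i$. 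Here the divisibility $m_i\mid \mathrm{ord}(\sigma)$ is simply the orbit–stabilizer statement. The same $g$ works simultaneously for all $s$ blocks, since it does not depend on $i$. By linearity it suffices to treat a single $f$, so $T(\mathcal{C})\subseteq\mathcal{C}$, and equality follows because $T$ is a bijection.

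\textbf{Main obstacle.} The only point that needs care is the wrap-around in blocks of length $m_i<\mathrm{ord}(\sigma)$. Here the fact that $\sigma^{m_i}$ fixes $P_i$ replaces the role that $\mathrm{ord}(\sigma)$ played for long orbits in Theorem~\ref{costruzione}. A secondary point is to interpret the displayed codeword correctly, with each block truncated at $m_i$ and not at $\mathrm{ord}(\sigma)$, since otherwise the entries would repeat and the length would not be $n$.
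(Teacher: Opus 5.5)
Your proposal is correct and is exactly the ``straight generalization'' of the proof of Theorem~\ref{costruzione} that the paper invokes: the $\sigma$-invariance of $\mathscr{L}(tP_\infty)$ plus the choice $g=\sigma(f)=f\circ\sigma^{-1}$ for the blockwise shift, with the wrap-around in each block handled by $\sigma^{m_i}(P_i)=P_i$. Your reading of $\overline{P}$ as $P_\infty$, and of each block as truncated at length $m_i$ rather than $\mathrm{ord}(\sigma)$, is the right way to make the statement consistent.
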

	
	\begin{proof} The proof is a straight generalization of that of Theorem \ref{costruzione}.
	\end{proof}

	\section{Curves defined by separated polynomials}
	In this section we show how Theorem~\ref{costruzione} can be used to construct a broad and explicit family of QC algebraic geometry codes. We also discuss, in certain cases, the structure of the corresponding automorphism groups.
	
	Assume that $\mathrm{char}(\mathbb{F}_q)\neq 2$.
	Let $\mathcal{X}$ be a projective plane curve defined over $\mathbb{F}_{q^r}$, with $q$ a power of a prime $p$, given affinely by
	\begin{equation}\label{eq gen kummer}
		X^{m} = B(Y),
	\end{equation}
	where $B(Y)\in \mathbb{F}_{q^r}[Y]$ is a separable polynomial of degree $d$, $m\ge 2$, $d\ge 1$, $m\not\equiv 0 \pmod p$, and $\gcd(m,d)=1$. We additionally assume that $\deg(\mathcal{X})\ge 3$.
	
	Curves of this type are commonly referred to as \emph{cyclic}, \emph{superelliptic}, or \emph{Kummer} extensions of the rational function field. They admit an automorphism $\tau\in \mathrm{Aut}(\mathcal{X})$ of order $m$, defined by
	\[
	(x,y)\longmapsto (\xi x,\, y), \qquad \xi^{m}=1,
	\]
	such that the cyclic group $\langle \tau\rangle$ is normal in $\mathrm{Aut}(\mathcal{X})$, and the quotient curve $\mathcal{X}/\langle \tau\rangle$ has genus zero.
	
	Some relevant properties of $\mathcal{X}$ are summarized in the theorem below; see \cite[Proposition~3.7.3]{st} for an equivalent formulation in terms of function fields. Further special cases include the situation where $B(Y)$ is a linearized polynomial, treated in \cite[Section~12.1]{HKT} and \cite[Lemma~2.1]{boninimontanuccizini}, and the case $d=\deg(B(Y))>m$, discussed in \cite[Section~5.1]{shaska2019}.

	\begin{thm} \label{thm riassunto prop famiglia generale}
		The curve $\mathcal{X}:X^m-B(Y)=0$ is an absolutely irreducible curve with no affine singular points.
		\begin{itemize}
			\item [(i)] If $| m-d |=1, $ then $\mathcal{X}$ is non-singular.
			\item [(ii)]  \begin{itemize} 
				\item [(a)] If $m>d+1,$ then $P_{\infty}=(0:0:1)$ is an $(m-d)$-fold point of $\mathcal{X}.$
				\item [(b)] If $m<d+1,$ then $P_{\infty}=(0:1:0)$ is an $(d-m)$-fold point of $\mathcal{X}.$ 
				\item [(c)] In both cases, $P_{\infty}$ is the centre of only branch of $\mathcal{X};$ also, $P_{\infty}$ is the unique infinite point of $\mathcal{X}.$
			\end{itemize}
			\item [(iii)] The natural projection $\pi: \mathcal{X} \rightarrow  \mathcal{X}/\langle\tau\rangle \cong \mathbb{P}^1,$ $\pi(x,y)=y,$ has degree $m.$ Such cover is branched exactly at the roots $\alpha_1,\dots,\alpha_d$ of $B(Y)$. In particular $\mathcal{X}$ has genus $g=\frac{(m-1)(d-1)}{2}.$
			\item [(iv)] Let $\mathbb{K}=\overline{\F_q},$ and let $\mathbb{K}(x,y)=\mathbb{K}(\Xx),$ where $x^m=B(y),$ denote the function field of $\mathcal{X}.$ Then $\mathbb{K}(y) \subset \mathbb{K}(x,y)$ is a cyclic Galois extension of degree $m$ and $\textnormal{Gal}(\mathbb{K}(x,y) /\mathbb{K}(y))=\langle\tau\rangle.$
			\item [(v)] Let $G:=\textnormal{Aut}_{\mathbb{K}}(\mathbb{K}(x,y)).$ Then the quotient group $\overline{G}:=G/\langle\tau\rangle $ is isomorphic  to a subgroup of $\textnormal{PGL}(2,\mathbb{K})$ preserving the set $\{P_1,\dots,P_{d},P_\infty\}$ of all Weierstrass points of $\mathcal{X},$  where $P_i=(0,\alpha_i)$ for $i=1,\dots,d.$
		\end{itemize}
	\end{thm}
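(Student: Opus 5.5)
The plan is to treat the affine part, the point at infinity, and the cover $\pi$ separately, and then read off the genus and the automorphism statements from Kummer theory in the function field $\K(x,y)$, following \cite[Proposition~3.7.3]{st}.

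\emph{Irreducibility and affine smoothness.} Since $B$ is separable of degree $d\ge 1$, it has a simple root $\alpha_1$. Hence $v_{Y-\alpha_1}(B)=1$, which is coprime to $m$, so $X^m-B(Y)$ is irreducible over $\K(Y)$ by Eisenstein (equivalently, by the Kummer criterion). It is also primitive in $X$, so it is absolutely irreducible. For affine smoothness, the partial derivatives are $mX^{m-1}$ and $B'(Y)$. Because $p\nmid m$, a singular affine point must have $X=0$, hence $B(Y)=0$ and $B'(Y)=0$. This contradicts the separability of $B$.

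\emph{The point at infinity, items (i) and (ii).} I homogenize to $X^mZ^{e}=\tilde B(Y,Z)$, splitting into the cases $m>d$ and $m<d$. In the first case the top-degree part is $X^m$, so the only point at infinity is $(0:1:0)$ in the paper's coordinates. Dehomogenizing there and using $\gcd(m,d)=1$, the lowest-order form is a power of a single linear form, of degree $|m-d|$. This gives the multiplicity claims and shows the point is nonsingular exactly when $|m-d|=1$. For (ii)(c), unibranchness follows from total ramification: in $\K(y)$ the place $y=\infty$ satisfies $v_\infty(B)=-d$ with $\gcd(d,m)=1$, so by \cite[Proposition~3.7.3]{st} it is totally ramified in $\K(x,y)$. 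Thus exactly one place lies over it, and that place is centered at $P_\infty$.

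\emph{Items (iii) and (iv).} Since $\K$ contains the $m$-th roots of unity ($p\nmid m$), $\K(x,y)/\K(y)$ is a Kummer extension of degree $m$ generated by $x$, with Galois group $\langle\tau\rangle$. The cover $(x,y)\mapsto y$ then identifies $\mathcal{X}/\langle\tau\rangle$ with $\PP^1$. By Kummer theory, the place $y=\beta$ ramifies if and only if $m\nmid v_\beta(B)$. This happens precisely at the $d$ roots $\alpha_i$ and at infinity, each with total ramification $e=m$ and tame, since $p\nmid m$. Riemann--Hurwitz gives
\[
2g-2=m(-2)+(d+1)(m-1),
\]
and this rearranges to $g=(m-1)(d-1)/2$. (The statement says ``branched exactly at the roots''; I would note that $\infty$ is also branched, which is consistent with (ii)(c).)

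\emph{Item (v).} The main obstacle is showing that $\langle\tau\rangle$ is normal in $G$. My first approach is to show that $\K(y)$ is the fixed field of a canonical subgroup, for example that $\langle\tau\rangle$ is the unique cyclic subgroup of order $m$ whose quotient has genus $0$. Failing that, I would invoke the known normality result for Kummer curves that the paper already asserts (\cite[Section~5.1]{shaska2019}, together with the standard $m$-cyclic covering argument in \cite{HKT}), treating it as an input. Granting normality, $G$ preserves $\K(y)$, so $\overline G$ embeds in $\Aut(\K(y))=\mathrm{PGL}(2,\K)$. It must permute the branch points of $\pi$, namely $\{\alpha_1,\dots,\alpha_d,\infty\}$. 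Their preimages $P_i=(0,\alpha_i)$ and $P_\infty$ are the total ramification points. Finally, the Weierstrass-point identification in the statement is quoted from the cited literature rather than reproved.
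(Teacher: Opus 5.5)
\textbf{Verdict.} Your arguments for (i)--(iv) are correct. The paper gives no proof of this theorem beyond pointing to Stichtenoth's Proposition~3.7.3, and you follow the same Kummer-theoretic route. You are also right that $y=\infty$ is a branch point, although (iii) omits it. The genuine gap is in (v): neither your uniqueness route nor the fallback citation can establish normality of $\langle\tau\rangle$ in the stated generality, because under these hypotheses normality is false.

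\textbf{Normality fails in general.} Take $q$ odd and the Hermitian curve $X^{q+1}=Y^q+Y$. Here $m=q+1$ and $d=q$; $B$ is separable, $p\nmid m$ and $\gcd(m,d)=1$. In the paper's notation for $\mathcal{H}_q$, $\tau=\psi_{\xi,0,0}$ with $\xi$ a primitive $(q+1)$-th root of unity. For $b\neq 0$, the conjugate $\psi_{1,b,c}\circ\tau\circ\psi_{1,b,c}^{-1}$ sends $(x,y)$ to a point with first coordinate $\xi x+(1-\xi)b$, so it is not in $\langle\tau\rangle$. Hence $\langle\tau\rangle$ is one of several conjugate cyclic subgroups of order $m$ with rational quotient. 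It is not normal in $G\cong\mathrm{PGU}(3,q)$, and $G/\langle\tau\rangle$ is not even a group.

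\textbf{How to repair (v).} Statement (v) needs an extra hypothesis. You can assume normality outright. Alternatively, you can prove it when $d\ge 2m$ by making your uniqueness idea precise with the Castelnuovo--Severi inequality:
\begin{itemize}
\item Suppose $H=\sigma\langle\tau\rangle\sigma^{-1}\neq\langle\tau\rangle$. Then $\K(y)$ and $\K(\Xx)^H$ are both rational.
\item Their compositum is $\K(\Xx)^{\langle\tau\rangle\cap H}=\K(y,x^{m/k})$, where $1<k\mid m$. This field has degree $k$ over each of the two rational subfields.
\item By (iii) applied to $u^k=B(y)$, the compositum has genus $(k-1)(d-1)/2$.
\item Castelnuovo--Severi then gives $(k-1)(d-1)/2\le (k-1)^2$, i.e.\ $d\le 2k-1\le 2m-1$, a contradiction when $d\ge 2m$.
\end{itemize}
Once normality is available, your embedding of $\overline G$ into $\mathrm{PGL}(2,\K)$ preserving $\{\alpha_1,\dots,\alpha_d,\infty\}$ is fine.

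\textbf{The Weierstrass-point claim.} Do not import this identification from the literature either: for $m\ge 3$ it is false.
\begin{itemize}
\item On $\mathcal{H}_q$, all $q^3+1$ rational points are Weierstrass points, not just the $q+1$ points $P_1,\dots,P_q,P_\infty$.
\item Take $X^3=Y^4-1$ with $p\ge 5$. This is a classical smooth plane quartic, with total Weierstrass weight $g^3-g=24$. Each $P_i$ is an ordinary flex, since the tangent $Y=\alpha_i$ meets the curve only to order $3$. $P_\infty$ is a hyperflex. These five points therefore carry total weight only $6$.
\end{itemize}
What your argument actually proves is that $G$ permutes the total ramification points $P_1,\dots,P_d,P_\infty$ of $\pi$. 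That is the correct content of (v).

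\textbf{Minor points in (ii).}
\begin{itemize}
\item $\gcd(m,d)=1$ plays no role in the multiplicity computation.
\item For unibranchness, add that any place centred at a point at infinity has $v(y)<0$. It therefore lies over $y=\infty$, where there is only one place.
\item For $m>d$ the point at infinity is $(0:0:1)$ in the paper's coordinates, not $(0:1:0)$.
\end{itemize}
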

	The group $\overline{G}$ and the subfield $\K(\Xx)^G$ of $\K(\Xx)$ are called, respectively, the \emph{reduced automorphism group} of $\Xx$ and the \emph{invariant} of $G$ (or $G$-\emph{invariant}) in $\K(\Xx)$. By Galois theory and Theorem~\ref{thm riassunto prop famiglia generale}(iii), the extension $\K(\Xx)^G \subseteq \K(y)$ is rational. In particular, any generator of $\K(\Xx)^G$ is a $G$-invariant. Hence $\K(\Xx)^G=\K(z)$, where the invariant $z$ can be chosen as a rational function in $y$ of degree $|\overline{G}|$.
	
	For a general plane curve defined by an affine equation $F(X,Y)=0$, determining an explicit $G$-invariant as a rational function in $x$ and $y$ is typically a difficult problem, and may be computationally infeasible when the defining polynomial $F(X,Y)$ is too complicated or when the action of the generators of $G$ is not given by linear transformations. 
	
	In contrast, for plane curves defined by equations of the form~\eqref{eq gen kummer}, an explicit expression for $z=\phi(y)$ can be obtained more readily. We refer to \cite[Lemma~12]{shaska2019} for the case $d=\deg(B(Y))>m$, and to \cite{PGUinvariant} for the case where $\Xx$ is the Hermitian curve.
	
	\begin{rem} \label{reduced automorf group}
		Since $\mathcal{X}/\langle\tau\rangle$  has genus zero and $\overline{G} = \textnormal{Aut}(\mathcal{X}/\langle\tau\rangle)$ acts on the projective line as a subgroup of  $\textnormal{PGL}(2,\mathbb{K}),$ it follows that $\overline{G}$ is isomorphic to one of the
		following: $C_n,$ $D_n,$ $A_4,$ $S_4,$ $ A_5,$ semidirect product of elementary Abelian group
		with cyclic group, PSL$(2, q^r)$ and PGL$(2, q^r),$ where \begin{itemize} 
			\setlength\itemsep{0pt}
			\setlength\parskip{0pt}
			\setlength\parsep{0pt}
			\item $C_n = \langle \sigma \rangle$, with $\sigma(y) = \zeta y$ and $\zeta$ is a primitive $n$-th root of unity. 
			\item $D_{2n} = \langle \sigma, \lambda \rangle$, with $\sigma(y) = \xi y$, $\lambda(y) = \frac{1}{y}$, and $\xi$ is a primitive $n$-th root of unity. 
			\item $A_4= \langle \sigma, \mu \rangle$, with $\sigma(y) = -y$ and $\mu(y) = i \frac{y+1}{y-1}$, with $i^2 = -1$. 
			\item $S_4 = \langle \sigma, \mu \rangle$, with $\sigma(y) = i y$ and $\mu(y) = i \frac{y+1}{y-1}$,  $i^2 = -1$. 
			
			\item $A_5= \langle \sigma, \rho \rangle$, where $\sigma(y) = \xi y$, $\rho(y) = -\frac{y+b}{b y - 1}$, $\xi$ is a primitive fifth root of unity, and $b = -i(\xi + \xi^4)$ with $i^2 = -1$.
		\end{itemize}
		see \cite{valentini}. Also, in \cite{classautomgroupsuperelliptic}  all possible automorphism groups of genus $g \ge 2$ cyclic curves defined over a finite field of characteristic $p \neq 2$ are determined (for details see also \cite[Theorem 17]{shaska2019}).
	\end{rem}
	\begin{rem}\label{rem QC da aut centrale}
		Let $m \mid q^r-1$ and $\ell:=\#\{v \in \F_{q^r}:B(v)=0\}\le d.$ Then it is easy to see that $\tau$ is defined over $\F_{q^r}$ and induces on $\mathcal{X}(\F_{q^r})$ $\ell+1$ short orbits $\{P_1\},\dots,\{P_\ell\},\{P_\infty\},$ i.e., $\tau$ fixes $\ell +1$ $\F_{q^r}$-rational points of $\Xx$. Thus Theorem \ref{costruzione} yields the existence of an $[n,t-\frac{(m-1)(d-1)}{2}+1,\ge n-t]$ QC code of co-index $m,$ where  $ n=\frac{\mathcal{X}(\F_{q^r})-\ell-1}{m},$ and $(m-1)(d-1)-2 < t < n.$  
	\end{rem}

	In what follows, we undertake a detailed analysis of several well-known families of plane curves defined by separated polynomials of the form~\eqref{eq gen kummer}. For each family, we determine the automorphism group and, in particular, investigate the action of its elements on the rational points of the curve. Building on this analysis, we derive the existence of various QC and GQC codes with significant parameters.
	We conclude the section with an example illustrating how to obtain QC  codes with a high degree of flexibility in the co-index parameter starting from a curve whose automorphism group is cyclic.
	
	\begin{ex} \label{ex genere 6 kummer}
		Let $r=1$ and assume that $q \equiv 1 \pmod{21}$. Consider the curve $\Xx \colon x^3 = y^8 - y$ defined over $\F_q$, which has genus $g=7$. It is straightforward to verify that $\Xx$ admits a cyclic automorphism group $G$ of order $21$ such that $G = G_{P_\infty} = \langle \sigma \rangle$, where the automorphism $\sigma$ is given by $(x,y) \mapsto (\epsilon x,\xi y)$, with $\epsilon,\xi \in \F_q^\ast$ nontrivial elements satisfying $\epsilon^3 = \xi^7 = 1$. By Theorem~\ref{thm uppbond abel autom group}, the group $G$ is the largest cyclic automorphism group of $\Xx$.
		
		A direct computation shows that $\sigma$ induces on $\Xx(\F_q)\setminus\{P_\infty\}$ two short orbits of lengths $1$ and $7$, respectively, together with $\frac{\#\Xx(\F_q)-9}{21}$ long orbits.
		
		For $q=127$, a computation with \textsc{Magma}~\cite{magma} shows that the automorphism $\sigma(x,y)=(19x,2y)$ generates seven orbits of length $21$, denoted by $\mathcal{O}_1,\dots,\mathcal{O}_7$. Consequently, by Theorem~\ref{costruzione}, the AG code $\mathcal{C}(D_1,D_2)$, where $D_1=\sum_{i=1}^7\sum_{P_i \in \mathcal{O}_i} P_i$ and $D_2=tP_\infty$, is a $[147,t-6,\ge 147-t]_{127}$ QC code of co-index $21$ for $12<t<147$.
		
		Moreover, by considering the automorphism $\sigma^3(x,y)=(x,\xi y)$, one can also construct QC codes of co-index $7$. In this case, the condition on the field size can be relaxed to $q \equiv 1 \pmod{7}$.
	\end{ex}

	\section{Hyperelliptic curves}
	Kummer extensions of function fields, together with the associated family of superelliptic curves, constitute the main objects underlying the curves considered in our construction. The geometric curves arising from such extensions fit naturally into a broader framework, with hyperelliptic curves representing perhaps the most classical example, as they provide a setting in which several structural features can be described explicitly.
	
	When the exponent in~\eqref{eq gen kummer} equals \(m=2\), one obtains the class of \emph{hyperelliptic curves}, which may be viewed as the simplest nontrivial superelliptic curves. Throughout this section we assume that \(\mathrm{char}(\mathbb{F}_q)\neq 2\).
	
	Hyperelliptic curves play a central role in algebraic geometry as double covers of the projective line defined by a distinguished involution. This symmetry often leads to rich and frequently computable automorphism groups, allowing for a detailed understanding of their geometry and arithmetic. Beyond their theoretical relevance, the efficient arithmetic of their Jacobians has made hyperelliptic curves useful in applications, particularly in cryptography, where compact representations and fast group operations are desirable.
	
	A hyperelliptic curve \(\mathcal{H}\) defined over a field \(K\) of characteristic different from \(2\) admits an affine model of the form
	\[
	\mathcal{H}:\quad x^2 = f(y),
	\]
	where \(f(y)\in K[y]\) is a separable polynomial of degree \(d\ge 5\). The separability of \(f(y)\) ensures that the curve has no affine singularities.
	
	As one can see from  $(iii)$ in Theorem \ref{thm riassunto prop famiglia generale}, the projection \((x,y)\mapsto y\) defines a morphism of degree \(2\) from \(\mathcal{H}\) onto \(\mathbb{P}^1\); hence hyperelliptic curves arise as double covers of the projective line branched precisely at the roots of \(f(y)\) (and possibly at the point at infinity, depending on the parity of \(d\)). Such curves possess a distinguished involution, called the \emph{hyperelliptic involution}, given by
	\[
	(x,y)\longmapsto (-x,y),
	\]
	which generates a central subgroup of \(\mathrm{Aut}(\mathcal{H})\).
	
	Because of this explicit structure, hyperelliptic curves provide a natural setting in which many geometric and arithmetic properties can be described concretely, and they often serve as a testing ground for constructions over finite fields, including algebraic geometry codes and the study of automorphism groups.
	
	\begin{rem}
		Consider $m=2$ and let $\mathcal{X}:x^2-f(y)=0$ be an hyperelliptic curve of genus $g,$  with a square-free polynomial $f\in \F_{q^r}[Y]$ of degree $2g+1.$ %or $2g+2.$ 
		Then it is well known that $\mathcal{X}$ has no affine singular points and the unique point at infinity $P_\infty$ has multiplicity $2g-1$. Moreover, by Theorem \ref{thm riassunto prop famiglia generale} there is only one branch centered at it. The automorphism $\tau$ of $\Xx$ is now given by $(x,y)\mapsto (-x,y),$ known as the hyperelliptic involution. This yields by Remark \ref{rem QC da aut centrale} the existence of $[n,k,d]_{q^r}$ QC codes of co-index $2,$ where $n \le \#\Yy(\F_{q^r})-2(g+1),$ $k=n-t+1$ and $d \ge n-t,$ with $2g-2 < t < n.$ \end{rem}

	Starting from hyperelliptic curves, one can construct quasi-cyclic codes with relevant parameters, both asymptotically (as the size of the base field grows) and over fixed fields. When the base field is fixed and long codes are desired, it is advantageous to consider hyperelliptic curves of high genus, as they can admit many rational points (recall the Hasse--Weil upper bound $\#\Xx(\F_q) \le q+1+2g\sqrt{q}$). On the other hand, if strong error-correcting capabilities are required, curves of low genus are often preferable, since they typically yield codes with larger minimum distance, as discussed in Subsection~\ref{sottosez AG codes}.
	
	In particular, curves of genus $2$ allow one to construct, via Theorem~\ref{costruzione}, QC codes with parameters $[n,k,d]_{q^r}$ and minimum distance $d=n-k-1$, and in some cases $d=n-k$ (see Example~\ref{es g 2 coind 10}), together with a fairly flexible choice of the block length (co-index). These aspects will be examined in greater detail in the next section.
	
	We conclude this section with several examples of QC codes constructed from a hyperelliptic curve of genus $4$. In light of Theorem~\ref{thm uppbond abel autom group}, such a curve admits the largest possible cyclic automorphism group for a fixed genus, which, from the perspective of QC codes, provides greater flexibility in the choice of the corresponding co-index.

	\begin{ex}
		Let $r=1$ and $q\equiv 1 \mod{18}$ and consider the curve $\Xx\colon x^2 = y^{9}+a$ over $\F_q$, where $a$ is a $9$th power in $\F_q$. It is straightforward to verify that $\Xx$ admits a cyclic automorphism group over $\F_q$ of order $18$, generated by the automorphism $\sigma(x,y)=(-x,\xi y)$, where $\xi \in \F_q$ is a primitive $9$th root of unity. Moreover, the action of $\sigma$ on $\Xx(\F_q)\setminus\{P_\infty\}$ consists of two short orbits of lengths $2$ and $9$, respectively, and of $\frac{\#\Xx(\F_q)-12}{18}$ long orbits. \\
		In particular, for $q=73$ the curve $\Xx$ has $66$ $\F_q$-rational points, and the automorphism $\sigma\colon (x,y)\mapsto (-x,2y)$ induces on $\Xx(\F_q)$ three orbits of length $18$. By choosing divisors $D_1$ and $D_2$ in a way completely analogous to that of Example~\ref{ex genere 6 kummer}, one can construct via Theorem \ref{costruzione} $[66,t-3,\ge 66-t]_q$ QC codes of co-index $18$, for every $6<t<66$. 
		
		Finally, since for $i \in \{2,3,6\}$ one has $\textnormal{ord}(\sigma^i)=\frac{18}{i}$ and the action of $\sigma^i$ on $\Xx(\F_q)$ is equivalent to that of $\sigma$, this construction also yields QC codes of co-indices $3$, $6$, and $9$.

	\end{ex}

	\subsection{Genus 2}
	For curves $\mathcal{X}$ of genus $2$, a comprehensive classification of their automorphism groups was developed through a sequence of works produced by Cardona, Duursma, Gonzalez, Gutierrez, Kiyavash, Lario, Quer, Rio,
	Shaska, and V\"{o}\!\!
	lklein; see \cite{duursma2005vector,cardona1999curves,cardona2007curves,shaska2004elliptic}. This research effort, involving several contributors over those years, now provides a complete description of the possible automorphism groups in genus~2.
	
	Outside characteristic $2$, the classification of automorphism groups of genus-$2$ curves shows that the possible groups are
	\[
	\mathbb{Z}_2,\ \mathbb{Z}_{10},\ \mathbb{Z}_2 \times \mathbb{Z}_2,\ D_8,\ D_{12},\ 
	\mathbb{Z}_3 \rtimes D_8,\ \mathrm{GL}(2,3),
	\]
	and an extension of $S_5$ by the hyperelliptic involution.
	
	\smallskip
	
	Since in our setting we are particularly interested in automorphism groups containing elements of various possible orders, so as to allow greater flexibility in the co-index of the corresponding quasi-cyclic codes, we focus on the cases
	\[
	\operatorname{Aut}(\mathcal{X}) \in 
	\{\mathbb{Z}_3 \rtimes D_8,\ \mathrm{GL}(2,3),\ \mathbb{Z}_{10}\}.
	\]
	For $p \neq 2$, the genus-$2$ curves whose automorphism group $G$ belongs to this list are completely determined. They are birationally equivalent to irreducible curves of the form
	\[
	x^2 = f(y),
	\]
	where
	\[
	f(y) =
	\begin{cases}
		y^6 - 1, & \text{if } G \cong \mathbb{Z}_3 \rtimes D_8,\\[4pt]
		y^5 - y, & \text{if } G \cong \mathrm{GL}(2,3),\\[4pt]
		y^5 + 1, & \text{if } G \cong \mathbb{Z}_{10}.
	\end{cases}
	\]
	
	In the first case, however, the stabilizer $G_{P_\infty}$ contains no elements of order other than $1,2,3,4,$ and $6$. Consequently, Theorem~\ref{costruzione} yields only quasi-cyclic codes of co-index $2,3,4,$ or $6$. Codes of this type were already obtained in \cite{DingTong}, starting from AG codes on elliptic curves; indeed, any automorphism of an elliptic curve fixing the point at infinity has order at most $6$, see \cite[Theorem~1]{DingTong} and \cite[Theorem~11.94]{HKT}.
	
	\smallskip
	
	On the other hand, the curves 
	\[
	\mathcal{X}\colon x^2 = y^5 + 1 
	\qquad\text{and}\qquad
	\mathcal{Y}\colon x^2 = y^5 - y
	\]
	give rise to quasi-cyclic codes of co-indices $10$ and $8$, respectively.
	
	\begin{rem} \label{rem genus 2 aut ciclico 8}
		Let $p \neq 2,5,$  $q^r \equiv 1$ (mod $8$) and $\mathcal{Y}\colon x^2 = y^5 - y.$ Then an automorphism $ \sigma \in G_{P_\infty}$  is given by  $\sigma(x,y)=(ax,a^2 y),$ where $a \in \F_{q^r}$ is a primitive $8$-th root of unity. In particular, $\operatorname{ord}(\sigma)=8$, and its action on $\mathcal{Y}(\mathbb{F}_{q^r})\setminus\{P_\infty\}$ induces two short orbits of sizes $1$ and $4$, namely $\{O\}=\{(0,0)\}$ and $\{(0,1),(0,a^2),(0,-1),(0,-a^2)\}$,  together with $\frac{\#\mathcal{Y}(\mathbb{F}_q)-6}{8}$ long orbits.
	\end{rem}
	The following proposition is now a direct consequence of the previous remark and of Theorem~\ref{costruzione}.
	
	\begin{prop}  \label{prop g 2 orbita 8}
		Let $\mathcal{Y}\colon x^2 = y^5 - y$ and let  $\sigma$ be as in Remark \ref{rem genus 2 aut ciclico 8}. Let $\mathcal{O}_1,\dots,\mathcal{O}_{m},$ $m=\frac{\#\mathcal{Y}(\mathbb{F}_{q^r})-6}{8},$ be all the possible long orbits defined by $\sigma.$ Consider the divisors $D_1=\sum_{i=1}^{m}\sum_{P_i\in \mathcal{O}_i}P_i,$  and $D_2=tP_\infty.$ Then for $2<t<\#\mathcal{Y}(\mathbb{F}_{q^r})-6,$ the algebraic geometry code $\mathcal{C} (D_1,D_2)$ is a $[\#\mathcal{Y}(\mathbb{F}_{q^r})-6,t-1,\ge \#\mathcal{Y}(\mathbb{F}_{q^r})-6 -t]_{q^r}$ QC code of co-index $8.$ 
	\end{prop}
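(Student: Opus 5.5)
The plan is to apply Theorem~\ref{costruzione} directly to $\mathcal{Y}$ with $\sigma(x,y)=(ax,a^2y)$. First I will check the hypotheses of that theorem.

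Since $f(y)=y^5-y$ is separable of degree $5=2g+1$ with $\gcd(2,5)=1$, Theorem~\ref{thm riassunto prop famiglia generale} gives the following facts. The curve $\mathcal{Y}$ has no affine singular points. It has a unique point at infinity $P_\infty$, with a single branch centred there. Its genus is $g=\frac{(2-1)(5-1)}{2}=2$.

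Next I verify that $\sigma$ is an automorphism of $\mathcal{Y}$. Substituting, $(ax)^2=a^2x^2$ and $(a^2y)^5-a^2y=a^{10}y^5-a^2y=a^2(y^5-y)$, because $a^8=1$. So $\sigma$ preserves the equation. It is defined over $\F_{q^r}$ because $a\in\F_{q^r}$, which is guaranteed by $q^r\equiv1\pmod 8$. It has order $8$ since $a$ is a primitive $8$-th root of unity. It fixes $P_\infty$, being the only point at infinity.

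The one step needing care is the orbit decomposition on $\mathcal{Y}(\F_{q^r})\setminus\{P_\infty\}$, which Remark~\ref{rem genus 2 aut ciclico 8} asserts; I would justify it as follows.
\begin{itemize}
\item Suppose $\sigma^k(P)=P$ for an affine point $P=(x,y)$ with $0<k<8$. Then $a^kx=x$ and $a^{2k}y=y$.
\item If $x\neq0$, then $a^k=1$, which is impossible for $0<k<8$.
\item Hence $x=0$, so $y$ is a root of $y^5-y$, that is, $y\in\{0,\pm1,\pm a^2\}$, using $(a^2)^2=-1$.
\item The point $(0,0)$ is fixed by $\sigma$.
\item The remaining four points form one orbit of size $4$, because $\sigma$ multiplies $y$ by $a^2$, which has order $4$.
\end{itemize}
Every other affine rational point therefore has trivial stabilizer, so it lies in a long orbit of size $8$. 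Thus $N-6$ points split into $m=\frac{N-6}{8}$ long orbits, where $N=\#\mathcal{Y}(\F_{q^r})$ and $6=1+1+4$ accounts for $P_\infty$ and the two short orbits. (The statement writes $\#\mathcal{Y}(\F_q)$ in the remark; I read this as $N$.)

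Finally I apply Theorem~\ref{costruzione} with these $m$ long orbits, $D_1$ their sum, and $D_2=tP_\infty$. The length is $n=8m=N-6$, and the theorem's condition becomes $2g-2=2<t<N-6$. It then gives dimension $k=t+1-g=t-1$ and $d\ge n-t=N-6-t$. The code is quasi-cyclic of co-index $\operatorname{ord}(\sigma)=8$.

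The main obstacle is the orbit count, specifically showing that no affine point outside the five points with $x=0$ has a nontrivial stabilizer. This is settled by the first-coordinate argument above. Everything else is bookkeeping.
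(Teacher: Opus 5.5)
Your proposal is correct and follows the paper's route: the paper obtains the proposition as a direct consequence of Remark~\ref{rem genus 2 aut ciclico 8} (the orbit structure of $\sigma$) together with Theorem~\ref{costruzione}, which is exactly what you do. You additionally supply the stabilizer argument, via $a^k x = x$ forcing $x=0$, that justifies the orbit decomposition the remark only asserts, and you correctly read its $\#\mathcal{Y}(\mathbb{F}_q)$ as $\#\mathcal{Y}(\mathbb{F}_{q^r})$.
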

	For the curve $\mathcal{X}\colon x^2 = y^5 + 1$, the construction is essentially analogous, with a few additional details. We present it below.
	
	\begin{rem} \label{rem genus 2 aut ciclico 10}
		Let $p \neq 2,5$ and $q^r \equiv 1$ (mod $5$). Then $G=G_{P_\infty}=\Z_2 \times Z_5=\langle \sigma \rangle,$ where $\sigma(x,y)=(-x,\xi y),$ and $\xi \in \F_{q^r}$ is a primitive $5$-th root of unity. In particular, $\operatorname{ord}(\sigma)=10$, and its action on $\mathcal{X}(\mathbb{F}_q)\setminus\{P_\infty\}$ induces two short orbits of sizes $2$ and $5$, namely $\{R_1,R_2\}=\{(1,0),(-1,0)\}$ and $\{Q_1,Q_2,Q_3,Q_4,Q_5\}$, where $Q_i=(0,-\xi^i)$, together with $\frac{\#\mathcal{X}(\mathbb{F}_{q^r})-8}{10}$ long orbits.
	\end{rem}
	\begin{prop}
		Let $\sigma$ be as in Remark \ref{rem genus 2 aut ciclico 10}. Let $\mathcal{O}_1,\dots,\mathcal{O}_{m},$ $m=\frac{\#\mathcal{X}(\mathbb{F}_{q^r})-8}{10},$ be all the possible long orbits defined by $\sigma.$ Consider the divisors $D_1=\sum_{i=1}^{m}\sum_{P_i\in \mathcal{O}_i}P_i,$ $D_2=D_1+\sum_{i=1}^5Q_i+R_1+R_2$ and $D_3=tP_\infty.$ Then for $2<t<\#\mathcal{X}(\mathbb{F}_{q^r})-8,$ the algebraic geometry code $\mathcal{C} (D_1,D_3)$ is a $[\#\mathcal{X}(\mathbb{F}_{q^r})-8,t-1,\ge \#\mathcal{X}(\mathbb{F}_{q^r})-8 -t]_{q^r}$ QC code of co-index $10.$ Also, for $2<t<\#\mathcal{X}(\mathbb{F}_q)-1,$ the algebraic geometry code $\mathcal{C} (D_2,D_3)$ is a $[\#\mathcal{X}(\mathbb{F}_{q^r})-1,t-1,\ge \#\mathcal{X}(\mathbb{F}_{q^r})-1-t]_{q^r}$ GQC code of co-index $(2,5,10,\dots,10).$
	\end{prop}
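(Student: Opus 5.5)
The plan is to deduce both assertions from Theorem~\ref{costruzione} and Theorem~\ref{costruzionegen}, using the orbit description of Remark~\ref{rem genus 2 aut ciclico 10}. The geometric input is as follows. The curve $\mathcal{X}\colon x^2=y^5+1$ has genus $2$ by Theorem~\ref{thm riassunto prop famiglia generale}(iii) with $m=2$, $d=5$, and it has a unique point at infinity $P_\infty$. The automorphism $\sigma(x,y)=(-x,\xi y)$ is defined over $\F_{q^r}$ because $\xi\in\F_{q^r}$ by the hypothesis $q^r\equiv 1 \pmod 5$. It fixes $P_\infty$, since it preserves the unique place at infinity (the pole of $y$), and $\operatorname{ord}(\sigma)=\operatorname{lcm}(2,5)=10$.

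I would first verify the orbit decomposition of $\mathcal{X}(\F_{q^r})\setminus\{P_\infty\}$. Take an affine point $(u,v)$ with stabilizer $\langle\sigma^j\rangle$ nontrivial.
\begin{itemize}
\item If $\sigma^5$ fixes it, then $(-u,v)=(u,v)$, so $u=0$ and $v^5=-1$. These are the five points $Q_i=(0,-\xi^i)$, and they are permuted cyclically by $\sigma$ since $\xi$ has order $5$.
\item If $\sigma^2$ fixes it, then $\xi^2v=v$, so $v=0$ and $u=\pm1$. These are $R_1,R_2$, which are swapped by $\sigma$.
\item No point is fixed by both $\sigma^5$ and $\sigma^2$, since $u=0$ and $v=0$ would force $0=1$.
\end{itemize}
Every other rational point lies in an orbit of length $10$. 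Counting then gives $\#\mathcal{X}(\F_{q^r})-1-2-5=10m$, so $m=(\#\mathcal{X}(\F_{q^r})-8)/10$ is an integer.

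For the first claim, apply Theorem~\ref{costruzione} with the $m$ long orbits, $n=10m=\#\mathcal{X}(\F_{q^r})-8$ and $g=2$. The condition $2g-2<t<n$ becomes $2<t<\#\mathcal{X}(\F_{q^r})-8$. This gives $k=t+1-g=t-1$ and $d\ge n-t$, and the code is QC of co-index $10$.

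For the second claim, the divisor $D_2$ is the sum of all affine rational points, organized into the orbits of lengths $2$, $5$ and $10,\dots,10$. Each length divides $\operatorname{ord}(\sigma)=10$, so Theorem~\ref{costruzionegen} applies with $n=2+5+10m=\#\mathcal{X}(\F_{q^r})-1$. For $2<t<n$ it gives a GQC code with parameters $[n,t-1,\ge n-t]$ and block lengths $(2,5,10,\dots,10)$.

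The invariance under the blockwise shift can be checked as in the proof of Theorem~\ref{costruzione}. Within each orbit $\mathcal{O}$ of length $m_i$, order the points as $P,\sigma(P),\dots,\sigma^{m_i-1}(P)$. Since $\sigma^{m_i}(P)=P$, replacing $f$ by $\sigma(f)\in\mathscr{L}(tP_\infty)$ cyclically shifts each block by one position simultaneously.

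The only point needing care is the ordering of the short orbits. For the $5$-orbit, the listing must follow $\sigma$, that is $Q_i\mapsto Q_{i+1}$, rather than an arbitrary labelling. The statement's $t<\#\mathcal{X}(\F_q)-1$ is read with $q^r$. I expect no real obstacle beyond confirming that $\xi$ and $-1$ lie in $\F_{q^r}$, so that all the listed points are rational.
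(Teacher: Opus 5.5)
Your proposal is correct and follows the paper's route. The paper offers no separate argument: it treats this proposition as a direct consequence of the orbit description in Remark~\ref{rem genus 2 aut ciclico 10} together with Theorems~\ref{costruzione} and~\ref{costruzionegen}, exactly as you do. Your stabilizer analysis via $\sigma^5$ and $\sigma^2$, and your remarks on ordering the $5$-orbit and on reading $\#\mathcal{X}(\F_q)$ as $\#\mathcal{X}(\F_{q^r})$, just make explicit what the paper asserts without proof.
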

	
	Moreover, we observe that by simply considering $\sigma^2(x,y) = (x,\xi^2 y)$ and its action on the rational points of $\mathcal{X}$, one can again apply Theorem~\ref{costruzione} to obtain quasi-cyclic codes of co-index $5$.\\
	Here we give some examples, which can be computed and verified directly using \textsc{Magma}~\cite{magma}.
	
	\begin{ex}
		Let $r=1$ and $q=41.$ Then for the curve $\mathcal{Y}:x^2=y^5-y$ with $54$ $\F_q$-rational points there is an automorphism of order $8$ fixing $P_\infty$ given by $\sigma(x,y)=(3x,9y).$ By a direct computation one gets that $\sigma$ generates $6$ long orbits $\mathcal{O}_1,\dots,\mathcal{O}_6$  given by \[
		\begin{aligned}
			\mathcal{O}_1 &= \{(7,19), (21,7), (22,22), (25,34), (34,19), (20,7), (19,22), (16,34)\},\\
			\mathcal{O}_2 &= \{(34,30), (20,24), (19,11), (16,17), (7,30), (21,24), (22,11), (25,17)\},\\
			\mathcal{O}_3 &= \{(15,35), (4,28), (12,6), (36,13), (26,35), (37,28), (29,6), (5,13)\},\\
			\mathcal{O}_4 &= \{(39,5), (35,4), (23,36), (28,37), (2,5), (6,4), (18,36), (13,37)\},\\
			\mathcal{O}_5 &= \{(27,10), (40,8), (38,31), (32,33), (14,10), (1,8), (3,31), (9,33)\},\\
			\mathcal{O}_6 &= \{(40,26), (38,29), (32,15), (14,12), (1,26), (3,29), (9,15), (27,12)\}.
		\end{aligned}
		\]
		Now, the AG code $\mathcal{C}(D_1,D_2)$ defined in Proposition~\ref{prop g 2 orbita 8} is a $[48,\, t-1,\, \ge 48-t]_{41}$ QC code of co-index $8$, for $3 \le t \le 47$.

	\end{ex}
	\begin{ex} \label{es g 2 coind 10}
		Let $r=1$ and $q=31.$ Then $\#\mathcal{X}(\mathbb{F}_q)=28$, and the automorphism $\sigma(x,y)=(-x,2y)$ induces on $\mathcal{X}(\mathbb{F}_q)\setminus\{P_\infty\}$ two short orbits, namely $\{(1,0),(-1,0)\}$ and $\{(0,15),(0,23),(0,27),(0,29),(0,30)\}$, and two long orbits given by
		\[
		\mathcal{O}_1=\{(21,11),(10,26),(21,22),(10,21),(21,13),(10,11),(21,26),(10,22),(21,21),(10,13)\}
		\]
		and
		\[
		\mathcal{O}_2=\{(23,8),(8,2),(23,16),(8,4),(23,1),(8,8),(23,2),(8,16),(23,4),(8,1)\}.
		\]  Then, for $3 \le t \le 19$, the AG code $\mathcal{C}(D_1,D_3)$ is a $[20,\, t-1,\, \ge 20-t]_{31}$ QC code of co-index $10$, where the minimum distance satisfies $d = 20 - t + 1$ (that is, $\mathcal{C}(D_1,D_3)$ is an NMDS code) for $t \in \{3,17,19\}$, and $d = 20 - t$ in the remaining cases.
		Moreover, by simply using the two short orbits defined above and the automorphism $\sigma^2$, one obtains $[27,\, t-1,\, \ge 27-t]_{31}$ GQC codes of co-index $(2,5,10,10)$ and $[20,\, t-1,\, \ge 20-t]_{31}$ QC codes of co-index $5$.

	\end{ex}
	In the next subsection, we will focus on the case $r=2$ by considering a family of $\mathbb{F}_{q^2}$-maximal hyperelliptic curves. In particular, we will study the curve $\mathcal{X}_g\colon y^2 = x^{2g+1} + 1$ of genus $g$, which coincides with the curve $\mathcal{X}$ described above in the case $g=2$.

	\subsection{The $\F_{q^2}$-maximal curve $x^2=y^{2g+1}+1$ 
	}

	In \cite[Theorem 6
	]{tafaz} the author provides a complete characterization for the curve $\mathcal{X}_g:x^2-y^{2g+1}-1=0$ 
	to be $\F_{q^2}$-maximal (see also \cite[Theorem 1]{Tafaz_JPAA}). Specifically $\Xx_g$ is $\F_{q^2}$-maximal if and only if $2g+1$ divides $q+1.$
	
	\begin{rem}
		Consider $G:=\{(x,y)\mapsto (\epsilon x,\xi y) : \epsilon^2=\xi^{2g+1}=1 \}.$ It is easy to see that $G$ is an automorphism group of $\Xx_g$ fixing $P_\infty$ and  $G\cong \Z_2\times \Z_{2g+1}.$ Also, by Theorem~\ref{thm uppbond abel autom group}, the group $G$ is not properly contained in any abelian automorphism group of $\Xx$.
		This means, together with Theorems~\ref{costruzione} and \ref{costruzionegen}, that starting from AG codes arising from the curve $\mathcal{X}_g$, one can construct QC codes of co-index $\ell$, where $\ell$ is any divisor of $|G| = 4g + 2$.
		
		Infact, consider the field $\F_{q^2},$ where $q$ is an odd prime power. Let $g\in \N$ such that $2g+1 \mid (q+1).$ Then $\#\Xx_g(\F_{q^2})=q^2+2gq+1$ and $G=\langle \sigma \rangle,$ where $\sigma(x,y)=(-x,\xi y),$ with $\textnormal{ord}(\sigma)=2(2g+1).$ Also, $\sigma$ fixes $P_\infty$ and induces on $\Xx_1(\F_{q^2})\setminus \{P_\infty\}$   two short orbits of size  $2$ and $2g+1,$ respectively, and $\frac{q-1}{2}+\frac{(q-2)(q+1)}{4g+2}$ long orbits. Therefore, Theorems~\ref{costruzione} and~\ref{costruzionegen} ensure the existence of a QC code and a GQC code with parameters 
		\[
		[q^2 +2g(q-1)-3,\, t_1 - g + 1,\, \ge q^2 +2g(q-1)-3 - t_1]_{q^2}
		\quad\text{and}\quad
		[q^2 + 2gq,\, t_2 - g + 1,\, \ge q^2 + 2gq - t_2]_{q^2},
		\]
		for $2g - 2 < t_1 < q^2 - 2$ and $2g - 2 < t_2 < q^2 + 2g$, with co-indices $4g + 2$ and $(2,2g + 1, 4g + 2, \dots, 4g + 2)$, respectively.\\
		Finally, since any power of $\sigma$ has order dividing $4g+2$ and the action of such a power on $\mathcal{X}_g(\mathbb{F}_{q^2})$ is analogous to that of $\sigma$, one can construct QC and GQC codes with the parameters just described and co-indices $2\ell$ and $(2,\ell,2\ell,\dots,2\ell)$, respectively, where $\ell$ is any divisor of $2g+1$.
	\end{rem}

	\section{The norm-trace curve}
	Let \begin{equation}\label{eq normtrace}\mathcal{N}_{q,r}: x^{\frac{q^r-1}{q-1}}=y^{q^{r-1}}+y^{q^{r-2}}+\cdots+y \end{equation}be the norm-trace curve. 
	The norm-trace curve is quite famous in the literature \cite{geil2003codes}, and has recently received renewed attention due to its applications in the construction of good algebraic geometry codes and in the study of their parameters \cite{bonini2020intersections,bonini2023rational,carvalho2024decreasing}.
	It is well known that $\mathcal{N}_{q,r}$ has $q^{2r-1}+1$ $\F_{q^r}$-rational points, with a unique point at infinity $P_{\infty}=(0:0:1)$. Also, in \cite[Theorem 3.1]{boninimontanuccizini} the authors prove that for $r\geq 3$, ${\rm{Aut}}(\mathcal{N}_{q,r})={\rm{Aut}}(\mathcal{N}_{q,r})_{P_\infty}$ has order $q^{r-1}(q^r-1)$ and is a semidirect product $G\rtimes C$ where
	$$G=\left\{(x,y)\mapsto(x,y+a)\,\,|\,\,\textnormal{Tr}_{q^r|q}(a)=0\right\}\,\,\,\,{\rm{and}}\,\,\,\,C=\left\{(x,y)\mapsto(bx,b^{\frac{q^r-1}{q-1}}y)\,\,|\,\,b\in\mathbb{F}_{q^r}^*\right\}.$$
	The following proposition is a direct consequence of \cite[Theorem 3.1]{boninimontanuccizini}. Denote by $\Omega=\{(0,\overline{y}): \textnormal{Tr}_{q^r|q}(\overline{y})=0\}$ the set of the $q^{r-1}$ $\F_{q^r}$-rational points of $\mathcal{N}_{q,r}$ which are the zeros of $x.$ 
	\begin{prop}  \label{prop orbite norma traccia}
		Let $\sigma \in {\rm{Aut}}(\mathcal{N}_{q,r})=G \rtimes C.$ Then the following cases occur. \begin{itemize}
			\item If $\sigma \in G,$ then $\sigma$ acts semiregularly on $\mathcal{N}_{q,r}(\F_{q^r})\setminus \{P_\infty\}$ with $q^{2r}/p$ orbits of length $p.$
			\item If $\sigma \in C$, set $\sigma(x,y)=(bx,b^{\frac{q^r-1}{q-1}}y),$ for some $b \in \F_{q^r}^*.$ Then  $\sigma$ fixes the point $(0,0)$, $\operatorname{ord}(\sigma)=\operatorname{ord}(b)\mid q^r-1$, and $\sigma$ induces on $\Omega \setminus \{(0,0)\}$ exactly $\frac{q^{r-1}-1}{\operatorname{ord}(\gamma)}$ short orbits, each of length $\operatorname{ord}(\gamma)$, where $\gamma = b^{\frac{q^r-1}{q-1}}$. Moreover, $\sigma$ acts semiregularly on $\mathcal{N}_{q,r}(\mathbb{F}_{q^r}) \setminus \bigl(\{P_\infty\} \cup \Omega\bigr)$.
			\item If $\sigma \notin G \cup C,$ set $\sigma(x,y)=(bx,b^{\frac{q^r-1}{q-1}}y+a),$ with $a \neq 0$ and $b \neq 1.$ \begin{itemize}
				\item [(a)] If $\gamma = b^{\frac{q^r-1}{q-1}} \neq 1$, then $\operatorname{ord}(\sigma)=\operatorname{ord}(b)$ and $\sigma$ fixes the point $P=(0,\tfrac{a}{\gamma-1}) \in \Omega$, and acts on $\Omega \setminus \{P\}$ with $\frac{q^{r-1}-1}{\operatorname{ord}(\gamma)}$ short orbits of length $\operatorname{ord}(\gamma)$.
				Moreover, $\sigma$ acts semiregularly on $\mathcal{N}_{q,r}(\mathbb{F}_{q^r}) \setminus \bigl(\{P_\infty\} \cup \Omega\bigr)$.
				\item [(b)] If $\gamma=1,$ then $\operatorname{ord}(\sigma)=p\cdot \operatorname{ord}(b),$ and $\sigma$ acts on $\Omega$ with  $\frac{q^{r-1}}{p}$ short orbits of length $p$. Moreover, $\sigma$ acts semiregularly on $\mathcal{N}_{q,r}(\mathbb{F}_{q^r}) \setminus \bigl(\{P_\infty\} \cup \Omega\bigr)$.
			\end{itemize}
		\end{itemize}
	\end{prop}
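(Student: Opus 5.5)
The plan is to make the powers of $\sigma$ explicit and solve the fixed-point equations $\sigma^k(P)=P$ on the affine $\F_{q^r}$-rational points. We split these points into $\Omega$, where $x=0$, and its complement, where $x\neq 0$. Every stabilizer, and hence every orbit length, follows from the orbit--stabilizer theorem recalled in the preliminaries. Throughout, we use two facts. First, every affine $\F_{q^r}$-rational point satisfies $N_{q^r|q}(x)=\textnormal{Tr}_{q^r|q}(y)$, and there are $q^{2r-1}$ such points. Second, $\gamma=b^{\frac{q^r-1}{q-1}}=N_{q^r|q}(b)\in\F_q^*$, so $\textnormal{Tr}_{q^r|q}(\gamma y)=\gamma\,\textnormal{Tr}_{q^r|q}(y)$. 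In particular $\Omega$ is stable under $\operatorname{Aut}(\mathcal{N}_{q,r})=G\rtimes C$, and so is its complement.

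\emph{Case $\sigma\in G$.} Here $\sigma(x,y)=(x,y+a)$ with $a\neq 0$ and $\textnormal{Tr}_{q^r|q}(a)=0$. Then $\sigma^k(x,y)=(x,y+ka)$, so $\operatorname{ord}(\sigma)=p$. No affine point is fixed by $\sigma^k$ unless $p\mid k$. Hence $\sigma$ is semiregular with orbits of length $p$, and the number of orbits is the number of affine rational points divided by $p$, namely $q^{2r-1}/p$. I would double-check this count against the exponent in the statement.

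\emph{Case $\sigma\in C$.} Here $\sigma^k(x,y)=(b^kx,\gamma^ky)$. Since $b\mapsto\sigma$ is injective, $\operatorname{ord}(\sigma)=\operatorname{ord}(b)$, and $(0,0)$ is clearly fixed. If $x\neq 0$, then $\sigma^k(P)=P$ forces $b^k=1$. So $\sigma$ is semiregular off $\Omega\cup\{P_\infty\}$, with orbits of length $\operatorname{ord}(b)$. If $P=(0,y)\in\Omega$ with $y\neq0$, then $\sigma^k(P)=P$ if and only if $\gamma^k=1$. So each orbit has length $\operatorname{ord}(\gamma)$, and counting gives $\frac{q^{r-1}-1}{\operatorname{ord}(\gamma)}$ orbits.

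\emph{Case $\sigma\notin G\cup C$.} Here $\sigma(x,y)=(bx,\gamma y+a)$.
In subcase (a), where $\gamma\neq1$, the key step is a conjugation. Let $c=\frac{a}{\gamma-1}$; since $\gamma-1\in\F_q^*$, we get $\textnormal{Tr}_{q^r|q}(c)=0$. Hence the translation $\theta(x,y)=(x,y+c)$ lies in $G$, and a direct check gives $\theta^{-1}\sigma\theta(x,y)=(bx,\gamma y)\in C$. Conjugation preserves orbit lengths and maps $\Omega$ and its complement to themselves. It also sends the fixed point $(0,0)$ to $\theta(0,0)=(0,c)=P$. So (a) follows from the case $\sigma\in C$, including $\operatorname{ord}(\sigma)=\operatorname{ord}(b)$.
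In subcase (b), where $\gamma=1$, we have $\sigma^k(x,y)=(b^kx,y+ka)$. Since $\operatorname{ord}(b)\mid q^r-1$ is coprime to $p$, $\operatorname{ord}(\sigma)=\operatorname{lcm}(\operatorname{ord}(b),p)=p\cdot\operatorname{ord}(b)$. On $\Omega$ the $x$-coordinate is $0$, so $\sigma^k(P)=P$ if and only if $p\mid k$. This gives orbits of length $p$, and there are $q^{r-1}/p$ of them. Off $\Omega$, fixing $P$ requires both $b^k=1$ and $p\mid k$, that is $\operatorname{ord}(\sigma)\mid k$, which gives semiregularity.

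The only non-routine step is the conjugation in subcase (a). What makes it work is checking that $\frac{a}{\gamma-1}$ has trace zero, which holds because $\gamma\in\F_q$. The rest is bookkeeping with orbit--stabilizer. The main point of care is the count in the first bullet, which should be derived directly from $\#\mathcal{N}_{q,r}(\F_{q^r})-1=q^{2r-1}$.
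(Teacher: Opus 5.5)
Your strategy is sound: explicit powers of $\sigma$ plus orbit--stabilizer, with subcase (a) reduced to $C$ by conjugating with a trace-zero translation. It is exactly the verification the paper leaves implicit, since the paper only cites \cite[Theorem 3.1]{boninimontanuccizini} for the structure $G\rtimes C$. Your correction of the first bullet to $q^{2r-1}/p$ orbits is also right.

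However, the ``direct check'' in subcase (a) fails. For $\theta(x,y)=(x,y+c)$ one computes
\[
\theta^{-1}\sigma\theta(x,y)=\bigl(bx,\ \gamma y+(\gamma-1)c+a\bigr),
\]
so $c=\frac{a}{\gamma-1}$ gives $(bx,\gamma y+2a)$, not $(bx,\gamma y)$. The translation that works is $c=\frac{a}{1-\gamma}$. The point fixed by $\sigma$ is therefore $\bigl(0,\frac{a}{1-\gamma}\bigr)$: indeed $\sigma(0,y)=(0,\gamma y+a)$ equals $(0,y)$ if and only if $(1-\gamma)y=a$. If you compose in the opposite order, your identity does hold for $c=\frac{a}{\gamma-1}$. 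But then the fixed point of $\sigma$ is $\theta^{-1}(0,0)=(0,-c)$, not $\theta(0,0)$, so your two sentences cannot both be true. Hence $P=\bigl(0,\frac{a}{\gamma-1}\bigr)$ in the statement is a second misprint, correct only when $p=2$. As written, your argument certifies it instead of catching it.

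The repair is immediate. Take $c=\frac{a}{1-\gamma}$, which still has trace zero because $\textnormal{Tr}_{q^r|q}(a)=0$. You use this fact silently and should state it. It is forced by $\sigma$ preserving the curve: since $\gamma\in\F_q$, we have $\textnormal{Tr}_{q^r|q}(\gamma y+a)=\gamma\,\textnormal{Tr}_{q^r|q}(y)+\textnormal{Tr}_{q^r|q}(a)$, while $(bx)^{\frac{q^r-1}{q-1}}=\gamma\, x^{\frac{q^r-1}{q-1}}$. With this $c$, the remaining claims of (a) transfer from the case $\sigma\in C$ exactly as you say: order $\operatorname{ord}(b)$, orbits of length $\operatorname{ord}(\gamma)$ on $\Omega\setminus\{P\}$, and semiregularity off $\Omega$.

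You can also skip the conjugation entirely. We have $\sigma^k(0,y)=\bigl(0,\gamma^k y+a\frac{\gamma^k-1}{\gamma-1}\bigr)$, which equals $(0,y)$ if and only if $(\gamma^k-1)\bigl(y-\frac{a}{1-\gamma}\bigr)=0$. This gives the fixed point and all stabilizers on $\Omega$ at once.
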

	
	The next two corollaries are direct consequences of Theorems~\ref{costruzione} and~\ref{costruzionegen}, Proposition~\ref{prop orbite norma traccia}, together with the equalities
	\[
	\{\operatorname{ord}(b) : b \in \F_{q^r}^{*}\}
	=
	\{\ell \in \N : \ell \mid q^r-1\},
	\qquad
	\operatorname{ord}(\gamma)
	=
	\operatorname{ord}\!\left(b^{\frac{q^r-1}{q-1}}\right)
	=
	\gcd\!\left(\operatorname{ord}(b),\, q-1\right).
	\]
	
	\begin{cor}
		Let $r \ge 3.$ Then there exist $[n,t -\frac{q(q^{r-1}-1)^2}{2(q-1)}+1,\ge n-t]_{q^r}$ QC linear codes of co-index $m$, where $ \frac{q(q^{r-1}-1)^2}{q-1}-2 < t < n$ and $n,m$  are as follows.
		\begin{itemize}
			\item 
			$n=q^{2r-1}$ and $m=p; $ or
			\item $n=q^{2r-1}-q^{r-1}$ and $m \in \{p^\nu \ell : \nu \in \{0,1\} \ \ \textnormal{and} \ \ \ell \mid q^r-1\}.$
		\end{itemize}
	\end{cor}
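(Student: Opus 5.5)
The plan is to apply Theorem~\ref{costruzione} directly, with the automorphisms supplied by Proposition~\ref{prop orbite norma traccia}; the remaining work is bookkeeping.

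\textbf{Parameters.} The norm--trace curve \eqref{eq normtrace} has the form \eqref{eq gen kummer} with $m=\frac{q^r-1}{q-1}$ and $d=q^{r-1}$. These satisfy $\gcd(m,d)=1$ and $p\nmid m$, so Theorem~\ref{thm riassunto prop famiglia generale}(iii) gives
\[
g=\frac{(m-1)(d-1)}{2}=\frac{q(q^{r-1}-1)^2}{2(q-1)}.
\]
This makes $k=t-g+1$ and the range $2g-2<t<n$ exactly those in the statement. By Theorem~\ref{thm riassunto prop famiglia generale}(ii)(c), $P_\infty$ is the unique point at infinity and there are no affine singular points. By \cite[Theorem 3.1]{boninimontanuccizini}, every automorphism is defined over $\F_{q^r}$ and fixes $P_\infty$. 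So the hypotheses of Theorem~\ref{costruzione} hold for any $\sigma\in\Aut(\mathcal N_{q,r})$. Since $\#\mathcal N_{q,r}(\F_{q^r})=q^{2r-1}+1$, there are $q^{2r-1}$ affine rational points, and $\Omega$ contains $q^{r-1}$ of them.

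\textbf{Case $n=q^{2r-1}$, $m=p$.} Take a nontrivial $\sigma\in G$, a translation $y\mapsto y+a$ with $\textnormal{Tr}(a)=0$. It has order $p$ and, by Proposition~\ref{prop orbite norma traccia}, acts semiregularly on all $q^{2r-1}$ affine rational points. (The count ``$q^{2r}/p$'' in the proposition should read $q^{2r-1}/p$.) Taking all these long orbits as $D_1$ in Theorem~\ref{costruzione} gives a QC code of length $q^{2r-1}$ and co-index $p$.

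\textbf{Case $n=q^{2r-1}-q^{r-1}$.} Here every admissible $\sigma$ should act semiregularly on $\mathcal N_{q,r}(\F_{q^r})\setminus(\{P_\infty\}\cup\Omega)$, with $\operatorname{ord}(\sigma)$ equal to the desired co-index; then Theorem~\ref{costruzione} applies to all orbits outside $\Omega$.
\begin{itemize}
\item For $\nu=0$: since every divisor $\ell\mid q^r-1$ is $\operatorname{ord}(b)$ for some $b\in\F_{q^r}^*$, take $\sigma\in C$ given by $b$. By the second item of the proposition, $\sigma$ is semiregular off $\Omega$ with order $\ell$.
\item For $\nu=1,\ \ell=1$: use $\sigma\in G$. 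It preserves $x$ and hence $\Omega$, so it is semiregular on the complement with orbits of length $p$.
\item For $\nu=1$, general $\ell$: use case (b) of the proposition, $\sigma(x,y)=(bx,\gamma y+a)$ with $a\neq0$ and $\gamma=1$. This has order $p\cdot\operatorname{ord}(b)$ and is semiregular off $\Omega$.
\end{itemize}

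\textbf{Main obstacle.} The delicate step is the last one: case (b) needs $\gamma=b^{\frac{q^r-1}{q-1}}=1$. By the stated formula this means $\gcd(\operatorname{ord}(b),q-1)=1$ (note that directly $\operatorname{ord}(b^N)=\operatorname{ord}(b)/\gcd(\operatorname{ord}(b),N)$ with $N=\frac{q^r-1}{q-1}$, so one of the two descriptions must be checked). In either case not every $\ell\mid q^r-1$ obviously qualifies. I would first determine exactly which $\ell$ admit $b$ with $\operatorname{ord}(b)=\ell$ and $\gamma=1$.

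If some $p\ell$ cannot be reached this way, I would try a power of a suitable $\sigma$, or a product of a translation with an element of $C$ of order $\ell$. Such a product has order $p\ell$ only when it is of type (b), so the statement for $\nu=1$ may have to be restricted to those $\ell$. Semiregularity of that element on the points outside $\Omega$ is then exactly what Proposition~\ref{prop orbite norma traccia}(b) provides.
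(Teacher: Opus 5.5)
You follow the same route as the paper: Theorem~\ref{costruzione} applied to automorphisms whose orbits are given by Proposition~\ref{prop orbite norma traccia}. Your treatment of the first bullet is correct, as is your correction of the orbit count to $q^{2r-1}/p$. In the second bullet, the cases $\nu=0$ and $m=p$ are also correct.

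The gap is the case you leave open, $m=p\ell$ with general $\ell\mid q^r-1$. This is not a bookkeeping step you can finish, because for this construction the claim is false as stated. Here is how to settle it.

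Put $N=\frac{q^r-1}{q-1}$. Every automorphism has the form $(x,y)\mapsto(bx,\,b^Ny+a)$ with $\textnormal{Tr}(a)=0$.

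If $\gamma=b^N\neq 1$, conjugate by the translation $y\mapsto y+\frac{a}{1-\gamma}$. This translation lies in $G$ because $1-\gamma\in\F_q^*$, and the conjugation moves $\sigma$ into $C$. Hence $\operatorname{ord}(\sigma)=\operatorname{ord}(b)$, which is prime to $p$.

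So the elements of order divisible by $p$ are exactly those with $\gamma=1$ and $a\neq 0$. They have order $p\cdot\operatorname{ord}(b)$ with $\operatorname{ord}(b)\mid N$. Your fallbacks (taking powers, or multiplying a translation by an element of $C$) stay inside this group, so they cannot produce anything new.

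Consequently, Theorem~\ref{costruzione} yields co-index $p\ell$ precisely when $\ell\mid\frac{q^r-1}{q-1}$. For instance, take $q=r=3$ and $\ell=2$. The group has order $234$, and its element orders are $1,2,3,13,26,39$. There is no automorphism of order $6$, even though $6\mid n=234$.

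The correct second bullet is therefore $m\in\{\ell:\ell\mid q^r-1\}\cup\{p\ell:\ell\mid \frac{q^r-1}{q-1}\}$. This matches the restriction the paper itself imposes in the next (GQC) corollary.

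The paper's one-line justification misses this point. Its identity $\operatorname{ord}(\gamma)=\gcd(\operatorname{ord}(b),q-1)$ holds only when $\gcd(r,q-1)=1$, since $N\equiv r \pmod{q-1}$. For example, with $q=4$, $r=3$ and $\operatorname{ord}(b)=3$, one has $\gamma=1$ although the gcd is $3$. Even granting the identity, case (b) would require $\gcd(\ell,q-1)=1$, not an arbitrary $\ell\mid q^r-1$.

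Your instinct to restrict the $\nu=1$ claim was right. You should state and prove the restricted version explicitly rather than leave it open.
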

	\begin{cor}
		Let $r \ge 3.$ Then there exist $[n,t -\frac{q(q^{r-1}-1)^2}{2(q-1)}+1,\ge n-t]_{q^r}$ GQC linear codes of co-index $(m_1,\dots,m_s)$, where $ \frac{q(q^{r-1}-1)^2}{q-1}-2 < t < n$ and $n,m_1,\dots,m_s$  are as follows.
		\begin{itemize}
			\item $n=q^{2r-1}-1,$  \begin{itemize}
				\item $\ell \mid q^r-1,$
				\item $s=s_1+s_2,$ where $s_1:=\frac{q^{r-1}-1}{\gcd(\ell,q-1)}$ and $s_2:=\frac{q^{2r-1}-q^{r-1}}{\ell},$
				\item $m_1=\dots=m_{s_1}=\gcd(\ell,q-1)$ and $m_{s_1+1}=\dots=m_s=\ell.$      
			\end{itemize}  
			\item $n=q^{2r-1}$
			\begin{itemize}
				\item $\ell \mid \frac{q^r-1}{q-1},$
				\item $s=s_1+s_2,$ where $s_1:=\frac{q^{r-1}}{p}$ and $s_2:=\frac{q^{2r-1}-q^{r-1}}{p\cdot \ell},$
				\item $m_1=\dots=m_{s_1}=p$ and $m_{s_1+1}=\dots=m_s=p\cdot \ell.$
			\end{itemize}
		\end{itemize}
	\end{cor}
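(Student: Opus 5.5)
The GQC corollary should follow by feeding explicit orbit decompositions from Proposition~\ref{prop orbite norma traccia} into Theorem~\ref{costruzionegen}. First I fix the genus. By Theorem~\ref{thm riassunto prop famiglia generale}(iii) with $m=\frac{q^r-1}{q-1}$ and $d=q^{r-1}$ we get
\[
g=\frac{q(q^{r-1}-1)^2}{2(q-1)},
\]
since $m-1=\frac{q(q^{r-1}-1)}{q-1}$. Hence $2g-2=\frac{q(q^{r-1}-1)^2}{q-1}-2$. This matches both the lower bound on $t$ and the dimension $t-g+1$ in the statement. Theorem~\ref{costruzionegen} then supplies $k=t+1-g$ and $d\ge n-t$ whenever $2g-2<t<n$. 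All chosen automorphisms lie in $\mathrm{Aut}(\mathcal{N}_{q,r})=\mathrm{Aut}(\mathcal{N}_{q,r})_{P_\infty}$, so they fix $P_\infty$. It remains, in each case, to exhibit $\sigma$ and read off its orbits on $\mathcal{N}_{q,r}(\F_{q^r})\setminus\{P_\infty\}$, a set of $q^{2r-1}$ points.

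\emph{First case, $n=q^{2r-1}-1$.} Take $b\in\F_{q^r}^*$ with $\operatorname{ord}(b)=\ell$, which is possible for any $\ell\mid q^r-1$, and take $\sigma\in C$. By Proposition~\ref{prop orbite norma traccia}, $\sigma$ fixes $(0,0)$. On $\Omega\setminus\{(0,0)\}$ it has $\frac{q^{r-1}-1}{\operatorname{ord}(\gamma)}$ orbits of length $\operatorname{ord}(\gamma)=\gcd(\ell,q-1)$. On the remaining $q^{2r-1}-q^{r-1}$ points it acts semiregularly, so it has orbits of length $\ell$. Discarding the fixed point $(0,0)$ and using all other orbits gives $n=q^{2r-1}-1$ together with the stated $s_1$, $s_2$ and $m_i$.

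\emph{Second case, $n=q^{2r-1}$.} Here I use case (b): take $\sigma(x,y)=(bx,\gamma y+a)$ with $a\neq0$ and $\gamma=b^{\frac{q^r-1}{q-1}}=1$. The condition $\gamma=1$ means $\operatorname{ord}(b)=\ell$ divides $\frac{q^r-1}{q-1}$. For $\sigma$ to be an automorphism at all, $a$ must satisfy $\mathrm{Tr}(a)=0$, and this is nonzero-compatible since $r\ge 2$. Then $\sigma$ has $q^{r-1}/p$ orbits of length $p$ on $\Omega$. On the complement it is semiregular of order $p\ell$, giving $\frac{q^{2r-1}-q^{r-1}}{p\ell}$ orbits of length $p\ell$. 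Using every affine rational point gives $n=q^{2r-1}$.

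The main obstacle is the second case, where one must check that $\gamma=1$ forces $\operatorname{ord}(\sigma)=p\ell$ and no smaller orbits off $\Omega$. Proposition~\ref{prop orbite norma traccia} asserts exactly this, so I would simply invoke it. I would also need to verify the degenerate subcase $\ell=1$ (then $b=1$ and $\sigma\in G$), which the first bullet of the proposition covers with all orbits of length $p$. Two smaller checks remain. The orbit lengths each divide $\operatorname{ord}(\sigma)$, as Theorem~\ref{costruzionegen} requires. The identity $\operatorname{ord}(\gamma)=\gcd(\operatorname{ord}(b),q-1)$ holds because $b\mapsto b^{(q^r-1)/(q-1)}$ is the norm, which is surjective onto the cyclic group $\F_q^*$.
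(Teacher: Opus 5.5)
Your route is the paper's: the genus from Theorem~\ref{thm riassunto prop famiglia generale}(iii), the orbit data from Proposition~\ref{prop orbite norma traccia}, then Theorem~\ref{costruzionegen}. The genus computation is correct, and so is the whole second bullet (case (b), and $\sigma\in G$ when $\ell=1$). The gap is the step you filed as a ``smaller check''. It is in fact the only nontrivial input to the first bullet: the identity $\operatorname{ord}(\gamma)=\gcd(\operatorname{ord}(b),q-1)$.

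Surjectivity of the norm onto $\F_q^*$ does not imply this identity. Put $N:=\frac{q^r-1}{q-1}$. The kernel of $b\mapsto b^N$ has order $N$, so an element $b$ of order $\ell$ gives $\operatorname{ord}(\gamma)=\operatorname{ord}(b^N)=\ell/\gcd(\ell,N)$. Since $N\equiv r\pmod{q-1}$, this equals $\gcd(\ell,q-1)$ for every $\ell\mid q^r-1$ if and only if $\gcd(r,q-1)=1$.

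Here is a concrete failure. Let $q$ be odd and $r\ge 4$ even (e.g.\ $q=3$, $r=4$), and take $\ell=2$, $b=-1$. Then $\gamma=(-1)^N=(-1)^r=1$, so $\sigma(x,y)=(-x,y)$ fixes every point of $\Omega$. The set $\Omega\setminus\{(0,0)\}$ therefore contributes $q^{r-1}-1$ blocks of length $1$, not $\frac{q^{r-1}-1}{2}$ blocks of length $\gcd(2,q-1)=2$. The same happens for $q=4$, $r=3$, $\ell=3$: then $b\in\F_4^*$ and $\gamma=b^{21}=1$.

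This is not only a defect of your write-up. The paper derives the corollary from the very same equality. So as soon as $\gcd(r,q-1)>1$, the first bullet as stated is not what the construction produces, for suitable $\ell$.

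Your argument does prove either of two corrected versions:
\begin{itemize}
\item the first bullet with $\gcd(\ell,q-1)$ replaced by $\ell/\gcd\bigl(\ell,\frac{q^r-1}{q-1}\bigr)$ in both $s_1$ and $m_1=\dots=m_{s_1}$, with $n$, $s_2$ and the blocks of length $\ell$ unchanged;
\item the statement verbatim under the extra hypothesis $\gcd(r,q-1)=1$. Then $\F_{q^r}^*$ is the direct product of its subgroups of the coprime orders $N$ and $q-1$, and your identity holds.
\end{itemize}

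You also placed the ``main obstacle'' in the wrong case. Case (b) is a one-line computation, $\sigma^k(x,y)=(b^kx,\,y+ka)$; the first bullet is where the statement actually breaks.
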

	
	In the next section, we will focus on the case $r=2$, in which the norm--trace curve defined in Equation~\eqref{eq normtrace} coincides with the well-known Hermitian curve.
	
	\section{The Hermitian curve}
	Let $\mathcal{H}_q : Y^{q} + Y = X^{q+1}$ be the Hermitian curve over $\mathbb{F}_{q^2}$.  
	Recall that $\mathcal{H}_q$ is $\mathbb{F}_{q^2}$‑maximal; that is, it attains the Hasse--Weil upper bound and has
	\[
	q^{2} + 1 + 2g(\mathcal{H}_q)\, q = q^{3} + 1
	\]
	$\mathbb{F}_{q^2}$-rational points, where 
	\[
	g(\mathcal{H}_q)=\frac{q(q-1)}{2}.
	\]
	Moreover, $\mathcal{H}_q$ is the norm--trace curve corresponding to the extension 
	$\mathbb{F}_{q^2}/\mathbb{F}_q$; in particular, it corresponds to the norm--trace curve when $r=2$.
	It is known (see \cite[Section 4]{GSX}), that the stabilizer $G$ of $P_{\infty}$ in $\textnormal{Aut}(\mathcal{H}_q)$ is 
	$$G=S_p\rtimes C,\,\,\,\,\,S_p=\left\{\psi_{1,b,c}\right\}\,\,\, \textnormal{and}\,\,\, C=\left\{\psi_{a,0,0}\right\}$$
	where 
	\begin{equation}\label{eq psi herm}\psi_{a,b,c}:(x,y)\longmapsto(ax+b,ab^q x+a^{q+1}y+c)\end{equation} 
	with 
	$$b,c\in\mathbb{F}_{q^2},a\in\mathbb{F}_{q^2}^*\,\, \textnormal{and}\,\,c^q+c=b^{q+1}.$$
	The following result, shown in \cite{GSX}, describes the complete action of each element of $G$ on the $\mathbb{F}_{q^2}$-rational points of $\mathcal{H}_q$. For the reader’s convenience, we also include the proof below.

	\begin{thm}\label{OrbiteHermitiana} Let $\psi_{a,b,c}$ as in Equation \eqref{eq psi herm}.  If $\psi_{a,b,c}\in S_p$, then $\psi_{a,b,c}$ acts semiregularly on the affine points of $\mathcal{H}_q$. If $\psi_{a,b,c}\notin S_P$ one of the following holds:
		\begin{itemize}
			\item[(1)] If $p\,|\,{\rm{ord}}(\psi_{a,b,c})$  there are $\frac{q}{p}$ short orbits of length $p$, and $\psi_{a,b,c}$ acts semiregularly on the others $q^3-q$ affine points of $\mathcal{H}_q$.
			\item[(2)] If ${\rm{ord}}(\psi_{a,b,c})\,|\,q+1$ $\psi_{a,b,c}$ fixes $q$ affine points of $\mathcal{H}_q$ and acts semiregularly on the others.
			\item[(3)] Otherwise $\psi_{a,b,c}$ fixes one affine point of $\mathcal{H}_q$, there are $\frac{q-1}{{\rm{ord}}(a^{q+1})}$ short orbits of length ${\rm{ord}}(a^{q+1})$, and $\psi_{a,b,c}$ acts semiregularly on the others $q^3-q$ points of $\mathcal{H}_q$.
		\end{itemize}    
	\end{thm}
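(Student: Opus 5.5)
The plan is to read off the orbit structure of $\langle\psi\rangle$, $\psi=\psi_{a,b,c}$, from the fixed points of its powers. A point $P$ lies in an orbit of length $\ell$ exactly when $\ell$ is the least $k\ge 1$ with $\psi^k(P)=P$. So $\psi$ acts semiregularly on a set $S$ exactly when no nontrivial power $\psi^k$ fixes a point of $S$. Throughout I use the composition law for the maps in \eqref{eq psi herm}: a direct substitution shows that $\psi_{a,b,c}\circ\psi_{a',b',c'}$ is again of the form $\psi_{aa',\ast,\ast}$. Hence $\psi^k=\psi_{a^k,b_k,c_k}$ for suitable $b_k,c_k$, and $G\to\F_{q^2}^*$, $\psi_{a,b,c}\mapsto a$, is a homomorphism with kernel $S_p$.

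\emph{Case $\psi\in S_p$.} Write $\psi=\psi_{1,b,c}$. A fixed affine point must satisfy $x=x+b$ and $y=b^qx+y+c$. These force $b=0$ and $c=0$, so only the identity fixes an affine point. Every power of $\psi$ again lies in $S_p$, so $\psi$ is semiregular on the affine points.

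\emph{Case $\psi\notin S_p$ and $p\nmid \operatorname{ord}(\psi)$.} Here $a\neq 1$. First I reduce to $C$: the candidate fixed point is $x_0=b/(1-a)$, and translating by a suitable element of $S_p$ (equivalently, Schur--Zassenhaus in $S_p\rtimes C$) conjugates $\psi$ to $\psi_{a,0,0}\colon(x,y)\mapsto(ax,a^{q+1}y)$. Conjugation preserves both $a$ and orbit lengths. For $\psi^k=\psi_{a^k,0,0}\neq\mathrm{id}$, a fixed affine point must have $x=0$, and then $y$ must satisfy $y^q+y=0$ and $(a^{k(q+1)}-1)y=0$.
\begin{itemize}
\item If $\operatorname{ord}(a)\mid q+1$, then $a^{q+1}=1$, so every nontrivial power fixes exactly the $q$ points $(0,y)$ with $y^q+y=0$ and fixes nothing else. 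This gives (2).
\item Otherwise $(0,0)$ is fixed. On the $q-1$ points $(0,y)$ with $y\ne0$, $\psi$ acts as multiplication by $a^{q+1}\in\F_q^*$, which is free on nonzero $y$, so the orbits there all have length $\operatorname{ord}(a^{q+1})$. The remaining $q^3-q$ points have $x\neq0$, and no nontrivial power fixes them. This gives (3).
\end{itemize}

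\emph{Case $p\mid\operatorname{ord}(\psi)$ and $\psi\notin S_p$.} Decompose $\psi=su=us$, where $s$ is the $p'$-part and $u$ is the $p$-part; both are powers of $\psi$. Since $S_p$ is the Sylow $p$-subgroup and is normal, $u\in S_p$, $u\neq 1$, and $s$ has the same $a$ with $a\ne1$. Conjugate so that $s=\psi_{a,0,0}$. Imposing the commutation $u\,\psi_{a,0,0}=\psi_{a,0,0}\,u$ with $u=\psi_{1,b,c}$ should force $b=0$ and $a^{q+1}c=c$. Since $c\neq0$, this gives $a^{q+1}=1$, and $u$ has order $p$.

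If $\psi^k$ fixes a point, then so does its $p$-part $u^k$. By the $S_p$ case this forces $u^k=1$, i.e.\ $\psi^k=s^k$. By the previous case, when $s^k\neq 1$ the fixed points of $s^k$ are exactly the $q$ points on $x=0$. Points with $x\ne0$ are therefore free, which gives $q^3-q$ points in long orbits. On the $q$ points with $x=0$, $s$ acts trivially because $a^{q+1}=1$, and $u$ acts as $y\mapsto y+c$. So the orbits there are the $u$-orbits, $q/p$ of them, each of length $p$. This gives (1).

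The main obstacle is making the conjugation steps rigorous. I will first try to avoid abstract group theory by exhibiting the conjugating element $\psi_{1,\beta,\gamma}$ explicitly. The choice $\beta=-x_0$ should move the fixed $x$-coordinate to $0$, and $\gamma$ is then chosen with $\gamma^q+\gamma=\beta^{q+1}$ to clear the $y$-translation. The remaining work is the computation verifying the commutator condition $a^{q+1}=1$.
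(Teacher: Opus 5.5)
Your proof is correct and has the same skeleton as the paper's. Both conjugate $\psi$ inside $S_p\rtimes C$ to a normal form, then read off the orbits on the $q$ points with $x=0$ and on the $q^3-q$ points with $x\neq 0$. The difference lies in how the normal form is obtained and justified.

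\textbf{The paper's route.} For $S_p$ it uses that $\mathcal{H}_q$ has $p$-rank zero together with \cite[Theorem 11.133]{HKT}. For $a\neq 1$ it cites \cite[Lemma 4.1]{GSX} to conjugate $\psi$ by an explicit $\psi_{1,e,f}$ to $\psi_{a,0,c^*}$ with ${c^*}^q+c^*=0$. It then cites \cite[Lemma 4.2]{GSX} for the fixed points, and separates the three cases according to whether $a^{q+1}=1$ and whether $c^*=0$.

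\textbf{Your route.} You settle the $S_p$ case by a direct fixed-point computation. You handle $p'$-elements with Schur--Zassenhaus. In case (1) you split $\psi=su$ into its $p'$-part and $p$-part, and extract $b=0$ and $a^{q+1}=1$ from the commutation relation.

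\textbf{What each buys.} The paper's argument is shorter, given the cited lemmas. Yours is self-contained and explains structurally why $a^{q+1}=1$ and $\mathrm{ord}(\psi)=p\cdot\mathrm{ord}(a)$ whenever $p\mid\mathrm{ord}(\psi)$. It also works uniformly in $p$. The paper's proof asserts that nontrivial elements of $S_p$ have order $p$, which fails for $p=2$: there $\psi_{1,b,c}^2=\psi_{1,0,b^{q+1}}\neq \mathrm{id}$ when $b\neq 0$. The semiregularity claim in the statement is unaffected.

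\textbf{A caution on your explicit-conjugation fallback.} When $a^{q+1}=1$, no choice of $\gamma$ clears the residual $y$-translation. Changing $\gamma$ amounts to an extra conjugation by some $\psi_{1,0,\delta}$, which commutes with $\psi_{a,0,c^*}$ in that case. In the $p'$ case with $a^{q+1}=1$, the translation vanishes only because otherwise $p$ would divide $\mathrm{ord}(\psi)$. Schur--Zassenhaus gives you this automatically.
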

	
	\begin{proof} If $\psi_{a,b,c}$  belongs to $S_p$ then ${\rm{ord}}(\psi_{a,b,c})=p$ and, since the $p$-rank of $\mathcal{H}_q$ is zero (\cite[Corollary 2.5]{GarTaf}), $\psi_{a,b,c}$ acts semiregularly on the affine points of $\mathcal{H}_q$ (see \cite[Theorem 11.133]{HKT}). Thus there are $\frac{q^3}{p}$ orbits of length $p$. Otherwise, $a\neq 1$, then, by the proof of \cite[Lemma $4.1$]{GSX}, $\tau^{-1}\circ\psi_{a,b,c}\circ\tau=\psi_{a,0,c^*}$, where ${c^*}^q+c^*=0$ and $\tau=\psi_{1,e,f}$ with $e=\frac{b}{a-1}$ and $f^q+f=e^{q+1}$. Therefore, when $a\neq 1$, we can assume, up to conjugation, $b=0$ and in particular ${\rm{ord}}a\mid{\rm{ord}}\psi_{a,b,c}$. For $a\neq 1$ we distinguesh the proof in three cases as in \cite[ Lemma $4.2$]{GSX}.
		\begin{itemize}
			\item [(1)] $p\,|\,{\rm{ord}}(\psi_{a,b,c})$.\\
			By \cite[Lemma $4.2$]{GSX}, in this case there are no affine points fixed by $\psi_{a,b,c}$. By \cite[Lemma $4.1$]{GSX}, since $p\,|\,{\rm{ord}}\psi_{a,b,c}$ the only possibility is $a^{q+1}=1$ and ${\rm{ord}}\psi_{a,b,c}=p\cdot{\rm{ord}}(a)$ and so, up to conjugation $\psi_{a,b,c}(x,y)=(ax,y+c^*)$ with  ${c^*}^q+c^*=0$. Consider $i \in \{0,\dots,\rm{ord}(\psi_{a,b,c})\}$ and $P=(\overline{x},\overline{y})\in \mathcal{H}_q$ such that $\psi_{a,b,c}^i(\overline{x},\overline{y})=(a^i\overline{x},\overline{y}+ip)=(\overline{x},\overline{y}).$  Note that there are $q$ points of $\mathcal{H}_q$ with $\overline{x}=0$ and $\overline{y}$ such that $\overline{y}^q+\overline{y}=0$; on these $q$ points the action of $\psi_{a,b,c}$ gives rase to $\frac{q}{p}$ orbits of length $p$. On the others $q^3-q$ points of $\mathcal{H}_q$ $\psi_{a,b,c}$ acts semiregularly, hence there are $\frac{q^3-q}{{\rm{ord}}(\psi_{a,b,c})}$ long orbits.
			\item [(2)] $p \nmid {\rm{ord}}(\psi_{a,b,c})$ and ${\rm{ord}}(\psi_{a,b,c})\,|\,q+1$.\\
			Since $a^{q+1}=1$, by the proof of \cite[Lemma $4.1$ (ii)]{GSX}, up to conjugation, we can assume $\psi_{a,b,c}(x,y)=(ax,y)$, it follows that ${\rm{ord}}(\psi_{a,b,c})={\rm{ord}}(a)$. By \cite[ Lemma $4.2$]{GSX} in this case there are $q$ fixed point, that are the $q$ points with $x=0$ (for all these points $y$ is such that $y^q+y=0$). On the others $q^3-q$ affine points of $\mathcal{H}_q$ $\psi_{a,b,c}$ acts semiregularly. Then there are $\frac{q^3-q}{{\rm{ord}}(\psi_{a,b,c})}$ long orbits.
			\item [(3)] ${\rm{ord}}(\psi_{a,b,c})\,|\,q+1$.\\
			By the proof of \cite[Lemma $4.1$, (ii) ]{GSX}, in this case, up to conjugation, $\psi_{a,b,c}(x,y)=(ax,a^{q+1}y)$ and hence ${\rm{ord}}(\psi_{a,b,c})={\rm{ord}}(a)$. Here, by \cite[ Lemma $4.2$]{GSX}, there is just one fixed point, that is $(0,0)$. There are $q-1$ points of $\mathcal{H}_q$ with $x=0$ and different from $(0,0)$; on these points the action of $\psi_{a,b,c}$ gives rase to $\frac{q-1}{{\rm{ord}}(a^{q+1})}$ orbits of length ${\rm{ord}}(a^{q+1})$, where ${\rm{ord}}(a^{q+1})=\frac{{\rm{ord}}(a)}{{\rm{gcd}}({\rm{ord}}(a),q+1)}$. On the others affine points of $\mathcal{H}_q$ with $x\neq 0$ $\psi_{a,b,c}$ acts semiregularly, and so there are $\frac{q^3-q}{{\rm{ord}}(a)}$ orbits of length ${\rm{ord}}(a)$.
			
		\end{itemize}
		
	\end{proof}

	\begin{thm}\label{HermitianCurve}
		Let $\psi_{a,b,c}$ be an automorphism of the Hermitian curve $\mathcal{H}_q$ belonging to the stabilizer of $P_\infty$.  
		Let $\mathcal{O}_1,\dots,\mathcal{O}_m$ be the long orbits of $\psi_{a,b,c}$.  
		Define the divisors
		\[
		D_1 = \sum_{i=1}^{m}\, \sum_{P \in \mathcal{O}_i} P, 
		\]\[
		D_2 = t P_\infty,
		\]
		where 
		\[
		q(q-1)-2 < t < m\cdot \mathrm{ord}(\psi_{a,b,c}).
		\]
		Then the algebraic‑geometry code $\mathcal{C}(D_1,D_2)$ is a quasi‑cyclic code of co‑index $\mathrm{ord}(\psi_{a,b,c})$.
		Moreover:
		\begin{itemize}
			\item If $\psi_{a,b,c} \in S_p$, then $\mathcal{C}(D_1,D_2)$ is a 
			\[
			[q^3,\, t-\frac{q(q-1)}{2}+1,\, \geq n-t]_{q^2}
			\]
			QC code.
			
			\item If $\psi_{a,b,c} \notin S_p$, then one of the following holds:
			\begin{itemize}
				\item[(1)] If $p \mid \mathrm{ord}(\psi_{a,b,c})$, then $\mathcal{C}(D_1,D_2)$ is a 
				\[
				[q^3 - q,\, t-\frac{q(q-1)}{2}+1,\, \geq n-t]_{q^2}
				\]
				QC code.
				
				\item[(2)] If $\mathrm{ord}(\psi_{a,b,c}) \mid (q+1)$, then $\mathcal{C}(D_1,D_2)$ is a 
				\[
				[q^3 - q,\, t-\frac{q(q-1)}{2}+1,\, \geq n-t]_{q^2}
				\]
				QC code.
				
				\item[(3)] Otherwise, $\mathcal{C}(D_1,D_2)$ is also a 
				\[
				[q^3 - q,\, t-\frac{q(q-1)}{2}+1,\, \geq n-t]_{q^2}
				\]
				QC code.
			\end{itemize}
		\end{itemize}
	\end{thm}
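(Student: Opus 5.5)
The plan is to obtain Theorem~\ref{HermitianCurve} as a direct specialization of Theorem~\ref{costruzione}. The orbit structure comes from Theorem~\ref{OrbiteHermitiana}, and the only work is counting the points lying on long orbits.

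First I record the standing data for $\mathcal{H}_q$. It is nonsingular in the affine part and has a unique point at infinity $P_\infty$. Its genus is $g=q(q-1)/2$, so $2g-2=q(q-1)-2$. Every $\psi_{a,b,c}$ of the form \eqref{eq psi herm} with $a\in\F_{q^2}^*$, $b,c\in\F_{q^2}$ is defined over $\F_{q^2}$ and fixes $P_\infty$. Hence $\sigma=\psi_{a,b,c}$ satisfies the hypotheses of Theorem~\ref{costruzione}. With $n=m\cdot\mathrm{ord}(\psi_{a,b,c})$ equal to the number of points on long orbits, the range $q(q-1)-2<t<n$ gives a QC code of co-index $\mathrm{ord}(\psi_{a,b,c})$ with dimension $t+1-g=t-\frac{q(q-1)}{2}+1$ and $d\ge n-t$. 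The quasi-cyclicity and the dimension and distance formulas therefore need no new argument.

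It remains to compute $n$ in each case. The space $\mathcal{H}_q(\F_{q^2})\setminus\{P_\infty\}$ has $q^3$ affine points, and I go through the cases of Theorem~\ref{OrbiteHermitiana}.
\begin{itemize}
\item If $\psi_{a,b,c}\in S_p$, the action on affine points is semiregular, so every affine orbit is long and $n=q^3$.
\item In case (1), the $q$ points with $x=0$ (up to the conjugation used in that proof) form short orbits of length $p$, and the action is semiregular on the remaining points. This gives $n=q^3-q$.
\item In case (2), those $q$ points are fixed, and again $n=q^3-q$.
\item In case (3), one fixed point and $\frac{q-1}{\mathrm{ord}(a^{q+1})}$ short orbits together account for $1+(q-1)=q$ points, so $n=q^3-q$.
\end{itemize}
Conjugation by $\tau=\psi_{1,e,f}$ maps $\F_{q^2}$-rational points bijectively onto themselves and preserves orbit lengths, so these counts hold for the original $\psi_{a,b,c}$ and not only for its normal form.

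The one point needing care is whether an orbit that is a short orbit of $\psi_{a,b,c}$ could have length equal to $\mathrm{ord}(\psi_{a,b,c})$, and so be counted as long. This is excluded because in cases (1) and (3) the lengths $p$ and $\mathrm{ord}(a^{q+1})$ are proper divisors of the order. In case (1) the order is $p\cdot\mathrm{ord}(a)$ with $a\ne1$. In case (3), $\mathrm{ord}(a^{q+1})<\mathrm{ord}(a)$ since $\gcd(\mathrm{ord}(a),q+1)>1$ (the case $\mathrm{ord}(a)\mid q-1$ would need a separate check). I expect this bookkeeping, together with matching the slightly informal case split of Theorem~\ref{OrbiteHermitiana} to the three cases of the statement, to be the main obstacle. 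The coding-theoretic part is immediate from Theorem~\ref{costruzione}.
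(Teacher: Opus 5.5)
Your reduction is the same as the paper's: specialize Theorem~\ref{costruzione} and read off $n$ from Theorem~\ref{OrbiteHermitiana}. The cases $\psi_{a,b,c}\in S_p$, (1) and (2) go through as you describe.

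The gap is in case (3), exactly at the point you flagged. Nothing forces $\gcd(\mathrm{ord}(a),q+1)>1$, and the ``separate check'' you postpone actually fails. Up to conjugation, $\psi_{a,b,c}$ becomes $\psi_{a,0,0}\colon(x,y)\mapsto(ax,a^{q+1}y)$, with $\mathrm{ord}(\psi_{a,b,c})=\mathrm{ord}(a)\nmid q+1$. The $q-1$ points $(0,y)$ with $y^q+y=0$, $y\neq 0$, lie in orbits of length $\mathrm{ord}(a^{q+1})=\mathrm{ord}(a)/\gcd(\mathrm{ord}(a),q+1)$.

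Now suppose $\gcd(\mathrm{ord}(a),q+1)=1$. This happens precisely when $1\neq\mathrm{ord}(a)\mid q-1$ and either $q$ is even or $\mathrm{ord}(a)$ is odd. Then these orbits have length $\mathrm{ord}(\psi_{a,b,c})$, so they are long. For instance, take $q=4$ and $a\in\F_4^*$ of order $3$. The map $\psi(x,y)=(ax,a^2y)$ falls in case (3), and $(0,1)\mapsto(0,a^2)\mapsto(0,a)\mapsto(0,1)$ is an orbit of length $3=\mathrm{ord}(\psi)$. The long orbits then cover $q^3-1=63$ affine points, not $q^3-q=60$. The same happens for $q=7$ and $\mathrm{ord}(a)=3$.

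So this step cannot be repaired by a better argument. With $D_1$ the sum over \emph{all} long orbits, the value $n=q^3-q$ claimed in case (3) is false in this subcase; in fact $n=q^3-1$. The paper's one-line proof has the same defect, inherited from Theorem~\ref{OrbiteHermitiana}(3), which calls the orbits of length $\mathrm{ord}(a^{q+1})$ short without checking.

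To get a correct statement you need one of the following:
\begin{itemize}
\item the extra hypothesis $\gcd(\mathrm{ord}(a),q+1)>1$ in case (3), equivalently $\mathrm{ord}(a^{q+1})<\mathrm{ord}(a)$;
\item a separate subcase with $n=q^3-1$;
\item a choice of $D_1$ using only the long orbits contained in the $q^3-q$ points with $x\neq0$ in the normalized coordinates. Theorem~\ref{costruzione} permits this, since it does not require all long orbits to be used.
\end{itemize}
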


	\begin{proof}
		The proof is a straigthforth consequence of Theorem \ref{costruzione} and Theorem \ref{OrbiteHermitiana}.   \end{proof}
	
	\subsection{A quotient of the Hermitian curve with a large cyclic automorphism group}
	In this subsection we take into account a family of quotients of the Hermitian curve admitting a large cyclic automorphism group.
	
	Such a family consists of curves $\mathcal{C}_m$ of equation
	$$y^q-y=x^m$$
	over $\F_{q^2},$ where $m>1$, $\gcd(q,m)=1$, $p\neq 3$ and $m \mid (q+1)$. 
	The curves $\mathcal{C}_m$ can be obtained as the quotient of the Hermitian curve $\mathcal{H}_q$
	over \(\mathbb{F}_{q^2}\) with respect to the cyclic subgroup of $\mathrm{PGU}(3,q)$ consisting of the automorphisms
	$(x,y) \mapsto (\lambda x, y),$ with $\lambda^m = 1.$
	For further details see \cite{GSX}.
	
	These curves have a cyclic automorphism group of order $N=pm$, that is $G=\langle\eta\rangle$ where 
	$$\eta(x)=\zeta_mx,\,\,\,\eta(y)=y+1$$
	with $\zeta_m$ a primitive $m$-th root of unity; furthermore the genus of these curves is 
	$$g=\dfrac{(q-1)(m-1)}{2}.$$
	Also, as a consequence of a result of Serre, cited in \cite[Proposition 6]{lachaud} and proved in \cite[Proposition 5]{aubryperret}, $\mathcal{C}_m$ is an $\F_{q^2}$-maximal curve, i.e., $\#\mathcal{C}_m(\F_{q^2})=q^2+1+(q-1)(m-1)q=mq(q-1)+(q+1).$ 
	
	It is not hard to see that $\eta$ fixes the point $P_\infty$ and acts with $q/p$ short orbits of length $p$ on the set
	\[
	\Omega := \{(0,y) : y \in \mathbb{F}_q\},
	\]
	and acts semiregularly on the remaining $mq(q-1)$ $\mathbb{F}_{q^2}$-rational points.
	Therefore the following result is a direct consequence of Theorems \ref{costruzione} and \ref{costruzionegen}.
	\begin{thm}
		Let $q,m>1$ where $q$ is a power of a prime $p,$ with $p \neq 3,$ $\gcd(p,m)=1$ and $m \mid (q+1).$ Then there exist $[mq(q-1),t -\frac{(q-1)(m-1)}{2}+1,\ge mq(q-1)-t]_{q^2}$ QC linear codes of co-index $pm$, where $ \frac{q(q^{r-1}-1)^2}{q-1}-2 < t < n.$ \\
		Also, there exist $[mq(q-1)+q,t -\frac{(q-1)(m-1)}{2}+1,\ge mq(q-1)-t]_{q^2}$ GQC linear codes of co-index $(m_1,\dots,m_s),$ where $s=\frac{q}{p}+\frac{q(q-1)}{p}=\frac{q^2}{p},$ $m_1=\dots=m_{\frac{q}{p}}=p,$ and $m_{\frac{q}{p}+1}=\dots=m_{\frac{q^2}{p}}=pm.$
	\end{thm}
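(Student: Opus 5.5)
The plan is to apply Theorems~\ref{costruzione} and~\ref{costruzionegen} directly to $\mathcal{C}_m$ with $\sigma=\eta$. This requires four facts: that $\eta$ is an $\F_{q^2}$-automorphism of $\mathcal{C}_m$ fixing $P_\infty$; that $\operatorname{ord}(\eta)=pm$; the exact orbit structure of $\langle\eta\rangle$ on $\mathcal{C}_m(\F_{q^2})\setminus\{P_\infty\}$; and the genus, which we take as given, $g=\frac{(q-1)(m-1)}{2}$.

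\textbf{Step 1: $\eta$ is an automorphism of the right order.} Since $m\mid q+1$, we have $\zeta_m\in\F_{q^2}$. Moreover
\[
(y+1)^q-(y+1)=y^q-y=x^m=(\zeta_m x)^m,
\]
so $\eta$ is an $\F_{q^2}$-automorphism of $\mathcal{C}_m$. The curve has a unique point at infinity, so $\eta$ fixes $P_\infty$. We have $\eta^k(x,y)=(\zeta_m^k x,\,y+k)$. Because $\gcd(p,m)=1$, this map is the identity if and only if $m\mid k$ and $p\mid k$, that is, $pm\mid k$. Hence $\operatorname{ord}(\eta)=pm$.

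\textbf{Step 2: orbit structure on affine rational points.} Let $P=(\bar x,\bar y)$ be an affine rational point and suppose $\eta^k(P)=P$. Comparing second coordinates gives $\bar y+k=\bar y$, so $p\mid k$. If $\bar x\neq 0$, comparing first coordinates gives $\zeta_m^k=1$, so $m\mid k$ and therefore $pm\mid k$. Thus every point with $\bar x\neq0$ lies in a long orbit of length $pm$. If $\bar x=0$, then $\bar y^q=\bar y$, i.e.\ $\bar y\in\F_q$, so these points form exactly the set $\Omega$ of $q$ points. On $\Omega$, $\eta$ acts as $y\mapsto y+1$, which gives $q/p$ orbits of length $p$.

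Counting with the maximality formula $\#\mathcal{C}_m(\F_{q^2})=mq(q-1)+q+1$, the points with $x\ne0$ number $mq(q-1)$. These split into $q(q-1)/p$ long orbits, and this count is an integer because $q(q-1)$ is divisible by $p$.

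\textbf{Step 3: apply the construction theorems.} For the QC code, take $D_1$ to be the sum of all long orbits, so $n=mq(q-1)$, and take $D_2=tP_\infty$ with $2g-2<t<n$. Theorem~\ref{costruzione} then gives a code with parameters $[n,\,t-g+1,\,\ge n-t]_{q^2}$ and co-index $pm$. The lower bound on $t$ written in the statement (copied from the norm–trace section) should be read as $2g-2=(q-1)(m-1)-2$; I would note this correction.

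For the GQC code, add the $q/p$ short orbits in $\Omega$ to $D_1$, so $n=mq(q-1)+q$. Theorem~\ref{costruzionegen} then applies, since every orbit length ($p$ or $pm$) divides $pm$. The block structure is $(p,\dots,p,pm,\dots,pm)$ with $q/p$ blocks of length $p$ and $q(q-1)/p$ blocks of length $pm$, for a total of
\[
s=\frac{q}{p}+\frac{q(q-1)}{p}=\frac{q^2}{p}
\]
blocks. The designed distance given by the theorem is $\ge n-t=mq(q-1)+q-t$, which is stronger than the bound stated, so the stated bound follows.

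\textbf{Main obstacle.} The only real content is Step 2, namely confirming that no point outside $\Omega$ has a nontrivial stabilizer and that the point count matches. Both follow from the coordinate computation above together with the maximality formula.
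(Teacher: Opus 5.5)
Your argument is the paper's own argument: determine the orbits of $\eta$ and feed them into Theorems~\ref{costruzione} and~\ref{costruzionegen}, with the details written out. Several of your steps are correct: the order computation, the fact that $\eta$ acts freely on affine points with $x\neq 0$, and your corrections of the range of $t$ to $(q-1)(m-1)-2<t<n$ and of the GQC distance bound. The step that fails, and which you inherit from the paper, is the count $\#\mathcal{C}_m(\F_{q^2})=mq(q-1)+q+1$. The curve $y^q-y=x^m$ is not $\F_{q^2}$-maximal for all the allowed parameters.

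\textbf{Why the count fails.} The image of $y\mapsto y^q-y$ on $\F_{q^2}$ is exactly $\{z: z^q=-z\}$, since it has $q$ elements and lies in that set. So an affine rational point with $x\neq 0$ exists iff $x^{m(q-1)}=-1$. For $p$ odd, the map $x\mapsto x^{m(q-1)}$ sends $\F_{q^2}^*$ onto its subgroup of order $(q+1)/m$, which contains $-1$ iff $(q+1)/m$ is even. Hence if $p$ is odd and $(q+1)/m$ is odd, then $\mathcal{C}_m(\F_{q^2})=\Omega\cup\{P_\infty\}$. There are no long orbits at all, and your $D_1$ is empty.

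\textbf{A concrete counterexample.} Take $q=5$, $m=2$, which satisfies every hypothesis. The curve $y^5-y=x^2$ has genus $2$ and only $6$ points over $\F_{25}$ (it is $\F_{25}$-minimal), not $46$.

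\textbf{Where maximality actually holds.} The maximal curve is $y^q+y=x^m$, the genuine quotient of $\mathcal{H}_q\colon y^q+y=x^{q+1}$ via $u=x^{(q+1)/m}$. The model $y^q-y=x^m$ is a twist of it, and is $\F_{q^2}$-isomorphic to it exactly when $p=2$ or $(q+1)/m$ is even.

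\textbf{How to repair the proof.} Work on $y^q+y=x^m$ with $\eta(x,y)=(\zeta_m x,\,y+c)$, for a fixed $c\in\F_{q^2}^*$ with $c^q=-c$, and take $\Omega=\{(0,y): y^q+y=0\}$. Then $\eta^k(x,y)=(\zeta_m^k x,\,y+kc)$, and your Steps 1--3 go through verbatim. The count of points with $x\neq 0$ is now justified by Serre's result applied to the $\F_{q^2}$-covering $\mathcal{H}_q\to\mathcal{C}_m$.
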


	\bibliographystyle{abbrv}
	\bibliography{BDG_AGQC}

@article{aubryperret,
	title={Divisibility of zeta functions of curves in a covering},
	author={Aubry, Yves and Perret, Marc},
	journal={Archiv der Mathematik},
	volume={82},
	number={3},
	pages={205--213},
	year={2004},
	publisher={Springer}
}

@article{martinez2006class,
	title={Is the class of cyclic codes asymptotically good?},
	author={Mart{\'\i}nez-P{\'e}rez, Conchita and Willems, Wolfgang},
	journal={IEEE transactions on information theory},
	volume={52},
	number={2},
	pages={696--700},
	year={2006},
	publisher={IEEE}
}

@article{ling2001algebraic,
	title={On the algebraic structure of quasi-cyclic codes. I. Finite fields},
	author={Ling, San and Sol{\'e}, Patrick},
	journal={IEEE Transactions on Information Theory},
	volume={47},
	number={7},
	pages={2751--2760},
	year={2001},
	publisher={IEEE}
}

@article{boninimontanuccizini,
	title={On plane curves given by separated polynomials and their automorphisms},
	author={Bonini, Matteo and Montanucci, Maria and Zini, Giovanni},
	journal={Advances in Geometry},
	volume={20},
	number={1},
	pages={61--70},
	year={2020},
	publisher={De Gruyter}
}

@article{chen1969some,
	title={Some results on quasi-cyclic codes},
	author={Chen, CL and Peterson, W Wesley and Weldon Jr, EJ},
	journal={Information and Control},
	volume={15},
	number={5},
	pages={407--423},
	year={1969},
	publisher={Elsevier}
}

@article{DingTong,
	title={Quasi-cyclic NMDS codes},
	author={Tong, Hongxi and Ding, Yang},
	journal={Finite Fields and Their Applications},
	volume={24},
	pages={45--54},
	year={2013},
	publisher={Elsevier}
}

@book{GarTaf,
	title={Certain maximal curves and Cartier operators},
	author={Garcia, Arnaldo and Tafazolian, Saeed},
	year={2007},
	publisher={Nieders{\"a}chsische Staats-und Universit{\"a}tsbibliothek}
}

@article{GSX,
	title={On subfields of the Hermitian function field},
	author={Garcia, Arnaldo and Stichtenoth, Henning and Xing, Chao-Ping},
	journal={Compositio Mathematica},
	volume={120},
	number={2},
	pages={137--170},
	year={2000},
	publisher={London Mathematical Society}
}

@book{HKT,
	title={Algebraic curves over a finite field},
	author={Hirschfeld, James William Peter and Korchm{\'a}ros, G{\'a}bor and Torres, Fernando and Torres, Fernando},
	volume={20},
	year={2008},
	publisher={Princeton University Press}
}

@article{lachaud,
	title={Sommes d’Eisenstein et nombre de points de certaines courbes alg{\'e}briques sur les corps finis},
	author={Lachaud, Gilles},
	journal={CR Acad. Sci. Paris},
	volume={305},
	number={01},
	year={1987}
}

@article{shaska2019,
	title={From hyperelliptic to superelliptic curves},
	author={Malmendier, Andreas and Shaska, Tony},
	journal={Albanian Journal of Mathematics},
	volume={13},
	number={1},
	pages={107--200},
	year={2019},
	publisher={Research Institute of Science and Technology (RISAT)}
}

@article{duursma2005vector,
	title={The vector decomposition problem for elliptic and hyperelliptic curves},
	author={Duursma, I.},
	journal={J. Ramanujan Mathematical Society},
	volume={20},
	number={1},
	pages={59--76},
	year={2005}
}

@article{cardona1999curves,
	title={On curves of genus 2 with Jacobian of {GL2}-type},
	author={Cardona, Gabriel and Gonz{\'a}lez, Josep and Lario, Joan-Carles and Rio, Anna},
	journal={manuscripta mathematica},
	volume={98},
	number={1},
	pages={37--54},
	year={1999},
	publisher={Springer}
}

@article{cardona2007curves,
	title={Curves of genus 2 with group of automorphisms isomorphic to $D_8$ or $D_{12}$},
	author={Cardona, Gabriel and Quer, Jordi},
	journal={Transactions of the American Mathematical Society},
	volume={359},
	number={6},
	pages={2831--2849},
	year={2007}
}

@article {PGUinvariant,
	AUTHOR = {Gatti, Barbara and Ghiandoni, Francesco and Korchm\'aros,
		G\'abor},
	TITLE = {The invariant of {${\rm PGU}(3, q)$} in the {H}ermitian
		function field},
	JOURNAL = {Art Discrete Appl. Math.},
	FJOURNAL = {The Art of Discrete and Applied Mathematics},
	VOLUME = {8},
	YEAR = {2025},
	NUMBER = {1},
	PAGES = {Paper No. 1.09, 8},
	ISSN = {2590-9770},
	MRCLASS = {14H37 (11G20 14H05)},
	MRNUMBER = {4879693},
	MRREVIEWER = {Saeed\ Tafazolian},
}

@inproceedings{shaska2004elliptic,
	title={Elliptic subfields and automorphisms of genus 2 function fields},
	author={Shaska, Tanush and V{\"o}lklein, Helmut},
	booktitle={Algebra, Arithmetic and Geometry with Applications: Papers from Shreeram S. Abhyankar’s 70th Birthday Conference},
	pages={703--723},
	year={2004},
	organization={Springer}
}

@article{magma,
	title={The Magma computational algebra system},
	author={Cannon, John and Steel, Allan and others},
	journal={Software available online (magma. maths. usyd. edu. au)},
	year={2005}
}

@article{classautomgroupsuperelliptic,
	title={Automorphism groups of cyclic curves defined over finite fields of any characteristics},
	author={Sanjeewa, R},
	journal={Albanian Journal of Mathematics},
	volume={3},
	number={4},
	pages={131--160},
	year={2009},
	publisher={Research Institute of Science and Technology (RISAT)}
}

@article{st,
	title={On automorphisms of geometric Goppa codes},
	author={Stichtenoth, Henning},
	journal={Journal of Algebra},
	volume={130},
	number={1},
	pages={113--121},
	year={1990},
	publisher={Elsevier}
}

@article{tafaz,
	title={A note on certain maximal hyperelliptic curves},
	author={Tafazolian, Saeed},
	journal={Finite Fields Their Appl.},
	volume={18},
	number={5},
	pages={1013--1016},
	year={2012}
}

@article{Tafaz_JPAA,
	title={A family of maximal hyperelliptic curves},
	author={Tafazolian, Saeed},
	journal={Journal of Pure and Applied Algebra},
	volume={216},
	number={7},
	pages={1528--1532},
	year={2012},
	publisher={Elsevier}
}

@article{valentini,
	title={A Hauptsatz of LE Dickson and Artin-Schreier extensions.},
	author={Madan, Manohar L and Valentini, Robert C},
	year={1980},
	journal={Journal f\"ur  die reine und angewandte Mathematik},
	publisher={Walter de Gruyter, Berlin/New York Berlin, New York}
}

@book{HP,
	title={Fundamentals of error-correcting codes},
	author={Huffman, W Cary and Pless, Vera},
	year={2010},
	publisher={Cambridge university press}
}

@article{bonini2023rational,
	title={Rational points on cubic surfaces and AG codes from the Norm--Trace curve},
	author={Bonini, Matteo and Sala, Massimiliano and Vicino, Lara},
	journal={Annali di Matematica Pura ed Applicata (1923-)},
	volume={202},
	number={1},
	pages={185--208},
	year={2023},
	publisher={Springer}
}

@article{bonini2020intersections,
	title={Intersections between the norm-trace curve and some low degree curves},
	author={Bonini, Matteo and Sala, Massimiliano},
	journal={Finite Fields and Their Applications},
	volume={67},
	pages={101715},
	year={2020},
	publisher={Elsevier}
}

@article{geil2003codes,
	title={On codes from norm--trace curves},
	author={Geil, Olav},
	journal={Finite fields and their Applications},
	volume={9},
	number={3},
	pages={351--371},
	year={2003},
	publisher={Elsevier}
}

@article{carvalho2024decreasing,
	title={Decreasing norm-trace codes},
	author={Carvalho, C{\'\i}cero and L{\'o}pez, Hiram and Matthews, Gretchen},
	journal={Designs, Codes and Cryptography},
	volume={92},
	number={5},
	pages={1143--1161},
	year={2024},
	publisher={Springer}
}

@article{melchor2018hamming,
	title={Hamming quasi-cyclic (HQC)},
	author={Melchor, Carlos Aguilar and Aragon, Nicolas and Bettaieb, Slim and Bidoux, Lo{\i}c and Blazy, Olivier and Deneuville, Jean-Christophe and Gaborit, Philippe and Persichetti, Edoardo and Z{\'e}mor, Gilles and Bourges, I},
	journal={NIST PQC Round},
	volume={2},
	number={4},
	pages={13},
	year={2018}
}

\end{document}